\def\set#1{{\{#1\}}}
\def\abs#1{{|\,#1\,|}}
\newcommand\cG{\mathcal{G}}
\newcommand\cL{\mathcal{L}}
\newcommand\es\emptyset
\newcommand\sse\subseteq
\newcommand\sm\setminus
\newcommand\cP{\mathcal{P}}
\newcommand\N{\mathbb{N}}
\newcommand\Z{\mathbb{Z}}
\newcommand\Sig\Sigma
\newcommand\sig\sigma
\renewcommand\epsilon\varepsilon
\renewcommand\phi\varphi
\renewcommand\theta\vartheta
\newcommand{\trans}{\overset{*}{\underset A\to}}
\newcommand{\sett}[2]{\left\{#1\mathrel{\left|\vphantom{#1}\vphantom{#2}\right.}#2\right\}}
\title{Hypergraph Automata: A Theoretical Model for Patterned Self-assembly}
\date{}
\begin{document}
\author{Lila Kari \and Steffen Kopecki \and Amirhossein Simjour}
\institute{Department of Computer Science, \\
	University of Western Ontario\\
 London, ON, N6A 5B7, Canada}

  \maketitle 
 \begin{abstract}
Patterned self-assembly is a process whereby  coloured tiles  self-assemble to build a rectangular coloured pattern. 
We propose  {\it self-assembly (SA) hypergraph automata} as an automata-theoretic model for patterned
 self-assembly.  We investigate the  computational power of SA-hypergraph automata and show  that for every
recognizable picture language, there exists an SA-hypergraph automaton that accepts this language.
Conversely, we prove that for any restricted SA-hypergraph automaton, there exists a Wang Tile System, 
a model for recognizable picture languages, that accepts the same language. The advantage of 
SA-hypergraph automata over Wang automata, acceptors for the class of recognizable picture languages, is that
 they do not rely on an {\it a priori} defined  scanning strategy.
\end{abstract}  
\section{Introduction}
DNA-based self-assembly is an autonomous  process whereby a disordered system of DNA sequences forms
 an organized structure or pattern as a consequence of Watson-Crick complementarity of DNA sequences,
 without external direction.
A DNA-tile-based self-assembly system starts from  DNA ``tiles'', each of which is formed beforehand from carefully designed single-stranded DNA sequences which bind via Watson-Crick complementarity and ensure the tiles' shape (square) and structure. In particular, 
the sides and interior of the square are double-stranded DNA sequence, while the corners have
 protruding DNA single strands that act as ``sticky ends''.  Subsequently, the individual tiles are mixed together and interact locally via their sticky-ends to form DNA-based {\it supertiles} whose structure is dictated by the base-composition of the individual tiles' sticky ends.
 Winfree \cite{ref38}  introduced the abstract Tile Assembly Model (aTAM)
 as a mathematical model for tile-based self-assembly systems.
 Ma \cite{ref22} introduced  the  patterned self-assembly of single patterns, whereby
  coloured tiles self-assemble to build a particular rectangular {\it coloured pattern}.
 Patterned self-assembly  models a particular type of application in which tiles
 may differ from each other by some distinguishable properties, modelled as colours \cite{motivation01,motivation02}.
Orponen \cite{p001,p002} designed several algorithms to find the minimum tile set required
 to construct one given coloured pattern. Czeizler \cite{p010} proved that this minimization
 problem is NP-hard. 

In this paper, we propose {\it  self-assembly (SA) hypergraph automata} as a
general   model for patterned self-assembly
 and investigate its connections to other models for 
two-dimensional information and computation, such as 2D (picture) languages and Wang Tile Systems.
A  2D (picture) language consists of  2D words (pictures),  defined as  
 mappings $p:[m]\times[n]\rightarrow[k]$ from the points 
 in the two-dimensional space to  a finite alphabet of cardinality $k$.
Here, $[k]$ denotes the set $[k]=\{1,2,\ldots,k\}$. Note that,  if we take the alphabet $[k]$ to
 be a set of colours,  the definition  of a picture is analogous to that of a coloured pattern \cite{ref22}.

Early generating/accepting systems  for 2D languages comprise $2 \times 2$ tiles  \cite{g31},
 2D automata \cite{ref109}, two-dimensional on-line tessellation acceptors \cite{tesselation}, and 2D grammars.
 More recently a  generating system was introduced by Varricchio \cite{wangtileref}  that used {\it Wang tiles}.  
A {\it Wang tile system} \cite{wangtileref} is  a specialized tile-based  model that generates
 the class  of {\it recognizable picture languages}, a subclass of the family of 2D languages.
The class of recognizable picture languages is also accepted by  {\it Wang automata}, a model  introduced in
 \cite{ref105}. Like other automata for 2D languages \cite{g06}, Wang tile automata
 use an explicit pre-defined scanning strategy \cite{ref111}  when reading the input picture and the accepted language 
depends on the scanning strategy that is used. Due to this, Wang automata are a suboptimal model for self-assembly.  
Indeed, if we consider the final supertile as  given, the  order in which tiles  are read   is
 irrelevant. On the other hand, if we consider the self-assembly process which results in the final supertile, an ``order of assembly'' cannot be pre-imposed.
 In contrast to Wang  automata,  SA-hypergraph automata are  scanning-strategy-independent.

SA-hypergraph automata are a  modification of the hypergraph automata introduced
 by Rozenberg \cite{h001} in 1982. An SA-hypergraph automaton (Section~\ref{sec:hypergraph}) accepts a 
language of labelled ``rectangular grid graphs'', wherein the labels are meant to
 capture the notion of colours used in  patterned self-assembly. An SA-hypergraph automaton consists
 of an underlying labelled graph (labelled nodes and edges)
 and a set of {\it hyperedges}, each of which is a subset of the set of nodes of the underlying graph. Intuitively, 
the hyperedges are meant to model tiles or supertiles while the underlying graph describes how these can attach
 to each other, similar to  a self-assembly process. 

 We investigate the computational power of SA-hypergraph automata and prove that for every
 recognizable picture language $L$ there is an SA-hypergraph automaton that accepts  $L$ (Thm.\ \ref{thm:one}). Moreover, 
we prove that for any  restricted SA-hypergraph automaton, there exists a Wang tile system that
 accepts the same language of coloured patterns (Thm.\ \ref{thm:two}). 
Here,  restricted SA-hypergraph automaton means an SA-hypergraph automaton in which certain situations that cannot occur during  self-assembly are explicitly excluded. 
 
\section{Preliminaries}

  A picture (two-dimensional word) $p$ over the alphabet $\Sigma$ is a two dimensional matrix of letters from $\Sigma$. Each element of this matrix is called a $\text{pixel}$. $p_{(i,j)}$ denotes the pixel in the $i$th row and $j$th column of this matrix. Two pixels $p_{(i,j)}$ and $p_{(i',j')}$ are adjacent if $|i-i'| + |j-j'| =1$. The function $w(p)$ denotes the width and $h(p)$ denotes the height of the picture $p$. $\Sigma^{\ast \ast}$ is the set of all pictures over the alphabet $\Sigma$. Let $\#$ be a letter which does not belong to the alphabet $\Sigma$. The {\it framed picture} $\hat{p}$ of $p\in\Sigma^{\ast \ast}$ is defined as:
  
\begin{figure}
  \[
  \hat{p} \ \ = \ \  
  \begin{matrix}
  \# & \# & \cdots & \# & \#\\
  \# & p_{(1,1)} & \cdots & p_{(1,w(p))} & \#\\
  \vdots & \vdots & \ddots & \vdots & \vdots \\
  \# & p_{(h(p),1)} & \cdots & p_{(h(p),w(p))} & \#\\
  \# & \# & \cdots & \# & \#
  \end{matrix}
  \]
\end{figure}  

  A \textit{picture language} (2D language) is a set of pictures over an alphabet $\Sigma$. For example, $L = \{ p \in \Sigma^{\ast \ast} |  \text{ for all } 1\leq i \leq h(p),  p_{(i,1)}=p_{(i,w(p))} \}$
 is the language of all rectangles that have the same first and last column.  

  A function $\delta\colon \N^2\to \N^2$ is a {\it translation function} if there exists $i',j'\in\Z$ such that $\delta(i,j) = (i+i',j+j')$ for all $i,j\in\N$.
  A \textit{subpicture} over $\Sigma$ is a two-dimensional matrix of letters from $\Sigma \cup \{ \text{\it{empty}} \}$. A subpicture $q$ is \textit{connected} if for every pair of pixels $q_{(i',j')},q_{(i,j)}\in \Sigma$ there exists a sequence of pixels $s=\langle s_0, s_1, \ldots , s_n\rangle$ from $q$ such that $s_0 = q_{(i,j)}$ and $s_n = q_{(i',j')}$, for all $0 \leq k < n$, we have $s_k \in \Sigma$. Moreover, $s_k$ and $s_{k+1}$ must be adjacent.
If $p$ is a picture, then $q$ is a subpicture of $p$ if there exists a translation function $\delta$ such that for all $(i,j) \in [h(q)]\times [w(q)]$ we have either $q_{(i,j)}=\text{\it empty}$ or $q_{(i,j)} = p_{\delta(i,j)}$.

  A \textit{picture tile} is a $2 \times 2$ picture (for example \begin{tabular}{|l|l|} \hline $a$ & $b$ \\ \hline $c$ & $d$ \\ \hline \end{tabular}  ). The language defined by a set of picture tiles $\Delta$ over the alphabet $\Sigma \cup \{ \# \}$ is denoted by $\mathcal{L}(\Delta)$ and is defined as the set of all pictures $p \in \Sigma^{\ast \ast}$ such that any $2 \times 2$ subpicture of $\hat{p}$ is in $\Delta$.
Giammarresi \cite{g31} defined a \textit{Picture Tiling System} (PTS)  as a 4-tuple $T = (\Sigma, \Gamma , \Delta, \pi)$, where $\Sigma$ and $\Gamma$ are two finite alphabets, $\Delta$ is a finite set of picture tiles over $\Gamma \cup \{ \# \}$ and $\pi : \Gamma \rightarrow \Sigma$ is a projection. The PTS $T$ recognizes the language $\cL(T) =  \pi(\mathcal{L}(\Delta))$.	
A picture language $L$ is called {\em PTS-recognizable} if there exists a picture tiling system $T$ such that $L=\mathcal{L}(T)$. Figure~\ref{fig:picturetilingsystem} shows an example. 
  
\begin{figure}[ht]
\begin{center}
\includegraphics[width=0.95\textwidth]{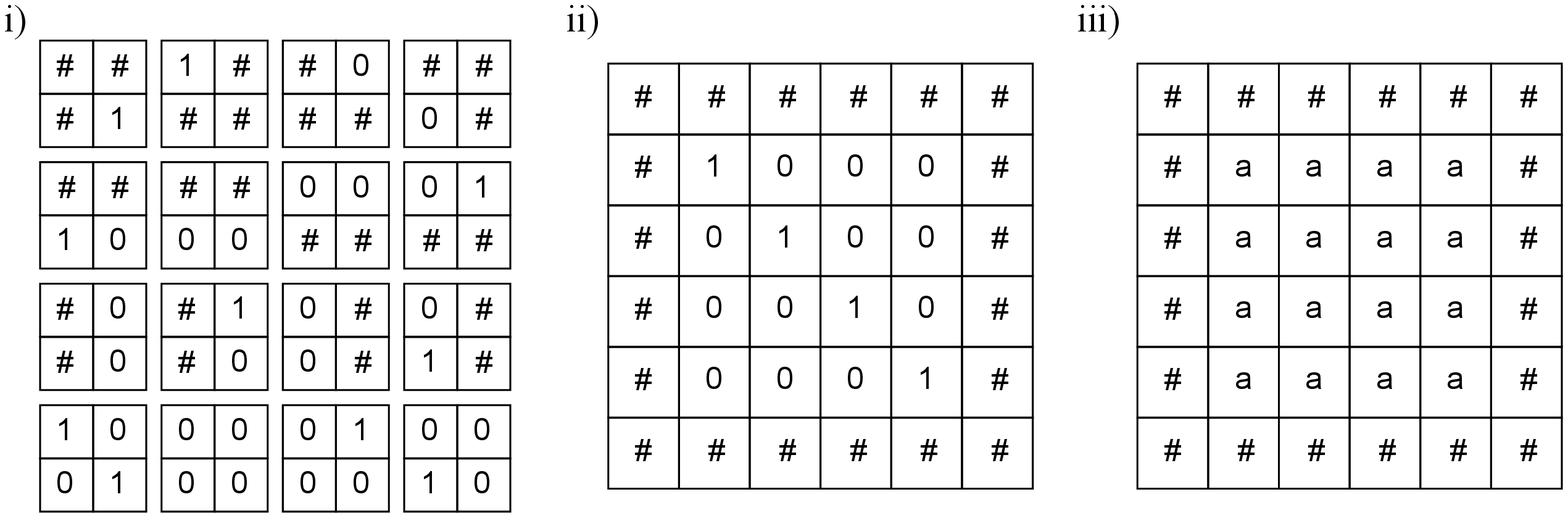}
\caption{Let $T=(\Sigma, T, \Delta, \pi)$ be the picture tile system where $\Sigma=\{a\}$, $\Gamma=\{0,1\}$, $\pi(0)=\pi(1)=a$ and $\Delta$ consists of the 16 different picture tiles in i). The PTS $T$ recognizes the language containing all square pictures $p$ where $w(p) = h(p) \ge 2$ and where every pixel is labelled with $a$. Part ii)\ is an example of framed picture $\hat{p}$ in $\mathcal{L}(\Delta)$, and iii) shows the projection $\pi(\hat{p})$ of the framed picture in part ii).}
\label{fig:picturetilingsystem}
\end{center}
\end{figure}

A equivalent definition of recognizability was proposed using labelled Wang tiles \cite{ref111}. A labelled Wang tile, shortly LWT, is a labelled unit square whose edges may be coloured. Formally, a LWT is a 5-tuple $(c_N, c_E, c_S, c_W, l)$, 
where $l$ belongs to a finite set of labels $\Sigma$ and
$c_N$, $c_E$, $c_S$, and $c_W$ belong to $C\cup\set\#$ where $C$ is a finite set of colours and $\#$ represents an uncoloured edge. Intuitively, $c_N, c_E, c_S$, and $c_W$ represent the colour of the north, east, south, and west edge of the tile, respectively. Labelled Wang tiles cannot rotate. The colours on the north, south, east, and west edges of an LWT $t$ are denoted by $\sigma_N(t)$, $\sigma_S(t)$, $\sigma_E(t)$, and $\sigma_W(t)$, respectively; moreover, $\lambda(t)$ denotes the label of $t$.	

A Wang Tile System (WTS)\cite{wangtileref} is a triple $W=(\Sigma,C,\Theta)$ where $\Sigma$ and $C$ are two finite alphabets (the alphabet of tile labels and the alphabet of colours, respectively) with $\#\notin C$, and $\Theta$ is a finite a set of labelled Wang tiles with labels from $\Sigma$ and colours from $C$.
The WTS $W$ recognizes the picture language $\mathcal{L}(W)$ where the picture $p\in\Sigma^{**}$ belongs to $\mathcal{L}(W)$ if and only if 
there exists a mapping $m\colon [h(p)]\times [w(p)]$ from the pixels of $p$ to tiles from $\Theta$ such that the label of the tile $m(p_{(i,j)})$ is equal to $p_{(i,j)}$;
moreover, this mapping must be {\it mismatch free}.
The mapping $m$ is mismatch free if for two adjacent pixels $p_{(i,j)}$ and $p_{(i+1,j)}$ in $p$ the south edge of $m(p_{(i,j)})$ and the north edge of $m(p_{(i+1,j)})$ are coloured by the same colour from $C$;
for two adjacent pixels $p_{(i,j)}$ and $p_{(i,j+1)}$  in $p$ the east edge of $m(p_{(i,j)})$ and the west edge of $m(p_{(i,j+1)})$ are coloured by the same colour from $C$;
and for every border pixel $p_{(i,j)}$ with $i = 1$, $j=1$, $i=h(p)$, or $j=w(p)$ we require that the north, west, south, or east edge, respectively, of $m(p_{(i,j)})$ is uncoloured.
For a pixel in a corner, e.\,g.\ $p_{(1,1)}$, this implies that two edges are uncoloured.
Let $\bar{p}$ be a two dimensional array of labelled Wang tiles from $\Theta$. We call $\bar{p}$ a Wang tiled version of the picture $p$ if the width and the height of $p$ and $\bar{p}$ are equal, and there exists a mismatch free mapping $m$ such that for any $i$ and $j$ we have $\bar{p}_{(i,j)} = m(p_{(i,j)})$. Two tiles $\bar{p}_{(i,j)}$ and $\bar{p}_{(i',j')}$ are adjacent if the pixels $p_{(i,j)}$ and $p_{(i',j')}$ are adjacent.
A language $L$ is {\em WTS-recognizable} if there exists a Wang tile system $W$ such that $W$ recognizes $L$. 
Figure~\ref{fig:wangtiling} shows an example. 

\begin{figure}[ht]
\begin{center}
\includegraphics[width=0.8\textwidth]{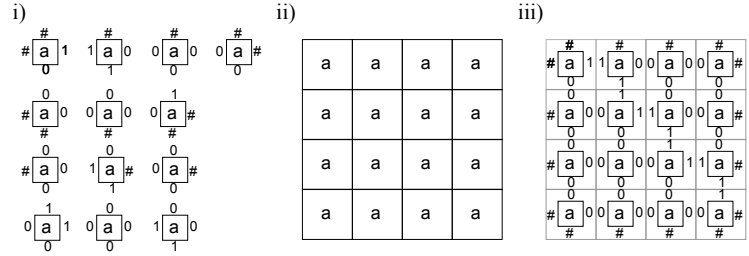}
\caption{Let $W=(\Sigma, C , \Theta)$ be the Wang Tile System where $\Sigma =\{a\}, C=\{0,1\}$ and $\Theta$ consists of the 13 LWTs shown in i). This Wang tile system recognizes the picture language containing all square pictures $p$ with $h(p) = w(p) \ge 3$ and where  every pixel is labelled by $a$. Part ii) is an example picture and iii) shows the Wang tiled version of the picture in part ii).}
\label{fig:wangtiling}
\end{center}
\end{figure}

\begin{proposition}[\cite{g31}]
	A picture language $L$ is PTS-recognizable if and only if it is WTS-recognizable. 
\end{proposition}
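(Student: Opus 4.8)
The plan is to prove the two implications separately by direct simulation, in each direction letting the simulating model's auxiliary alphabet record a $2\times2$ window of the simulated model's pictures.

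For the implication WTS-recognizable $\Rightarrow$ PTS-recognizable, start from $W=(\Sigma,C,\Theta)$ and put $\Gamma=\Theta$, $\pi(t)=\lambda(t)$, and let $\Delta$ consist of every $2\times2$ array over $\Theta\cup\set\#$ that can occur as a $2\times2$ subpicture of the framed picture of a Wang-tiled picture: the $\#$-entries must form one of the admissible frame-edge/corner patterns, each interior horizontal (resp.\ vertical) adjacency of two tiles in the array must carry equal east/west (resp.\ south/north) colours, and each tile-edge pointing at a $\#$ of the array must be uncoloured. I would then verify that $q\in\cL(\Delta)$ exactly when the assignment $p_{(i,j)}\mapsto q_{(i,j)}$ (where $p=\pi(q)$) is mismatch-free: ``$\Rightarrow$'' holds because any two adjacent pixels, and any border pixel together with its neighbouring frame cell, occur inside a common $2\times2$ subpicture of $\hat q$, so all of the WTS matching and border conditions are enforced tile by tile; ``$\Leftarrow$'' is immediate. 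Hence $\cL(T)=\pi(\cL(\Delta))=\cL(W)$.

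For the converse, start from $T=(\Sigma,\Gamma,\Delta,\pi)$ and let $C=(\Gamma\cup\set\#)^2\sm\set{(\#,\#)}$, reserving the pair $(\#,\#)$ as the name of the uncoloured edge. To each picture tile in $\Delta$ with corners $a$ (NW), $b$ (NE), $c$ (SW), $e$ (SE) and with $e\in\Gamma$, associate a labelled Wang tile with label $\pi(e)$, north colour $(a,b)$, west colour $(a,c)$, south colour $(c,e)$ and east colour $(b,e)$, with the convention that a computed colour $(\#,\#)$ designates the uncoloured edge. The idea is that the Wang tile placed at pixel $(i,j)$ records the $2\times2$ window of $\hat q$ whose SE corner is $q_{(i,j)}$, and colour agreement between two adjacent Wang tiles says precisely that the two overlapping such windows coincide on their shared row or column; an easy induction then gives that a picture $p$ has a mismatch-free Wang tiling iff $p=\pi(q)$ for some $q$ all of whose SE-anchored $2\times2$ windows lie in $\Delta$, and (using the frame pattern of the edge colours) that a border pixel automatically gets an uncoloured outward edge.

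The one delicate point — and the step I expect to cost the most care — is that the SE-anchored windows miss exactly those $2\times2$ subpictures of $\hat q$ that straddle the bottom or the right of the frame, since their SE corner is a $\#$ and so no anchored window covers them. To force these into $\Delta$ as well, I would refine the tiles occupying the bottom row and the rightmost column: such a tile carries, as part of its identity, the adjacent bottom-strip (resp.\ right-strip) $2\times2$ block of $\hat q$, required to lie in $\Delta$ and to share its top row (resp.\ left column) with the tile's primary window; its south (resp.\ east) edge is declared uncoloured, which is exactly what the WTS semantics asks of a border pixel and simultaneously forbids the refined tile from being used off the border. Adjacent refined border tiles are stitched together correctly already by the primary-window colours, which propagate the shared pixel value, so no new colours are needed; the SE-corner tile simply carries three such extra blocks. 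After this patch every $2\times2$ subpicture of $\hat q$ is checked, so $\cL(W)=\cL(T)$, and the two implications together prove the proposition.
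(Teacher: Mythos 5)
There is nothing in the paper to compare against here: the proposition is imported from Giammarresi's work via the citation and is used as a black box, with no proof given. So your argument stands on its own, and it is essentially the standard simulation from the literature; I believe it is correct. The WTS-to-PTS direction works exactly as you describe, since every adjacency constraint and every border constraint of a Wang tiling is witnessed inside some $2\times2$ window of the framed picture, so taking $\Gamma=\Theta$ and letting $\Delta$ be the locally consistent windows over $\Theta\cup\set{\#}$ gives $\cL(\Delta)$ in label-preserving bijection with the mismatch-free tilings. In the converse direction you correctly isolate the only real difficulty --- the SE-anchored windows never certify the $2\times2$ subpictures of $\hat q$ meeting the bottom or right frame --- and your patch closes it: the strip blocks are fully determined by the primary window (their non-\# row or column is the primary window's bottom row or right column, the rest is \#), so requiring them to lie in $\Delta$ is a finite filter on tile types that adds no new colours, and because $\#\notin C$ an uncoloured edge can never match in the interior, so refined tiles cannot migrate off the border while unrefined tiles cannot sit on it. Two points deserve one explicit sentence each in a written-out version: first, in the reconstruction of $q$ from a mismatch-free tiling you should define $q_{(i,j)}$ as the SE entry of the window carried by the tile at $(i,j)$ and note that agreement of the NW entry of one window with the SE entry of its diagonal neighbour's window follows from two applications of the edge-colour equalities (via either orthogonal neighbour) --- this is the substance of your ``easy induction'' and is where the ordered-pair colours are actually needed; second, tiles that are simultaneously in the top and bottom row (the case $h(p)=1$, and symmetrically $w(p)=1$) are handled because the north uncolouring is automatic from the \#-filled top row of the primary window while the south uncolouring comes from your refinement, so no separate doubly-refined construction is required, but this is worth saying rather than leaving implicit.
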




A coloured pattern, as defined in \cite{ref22} is the end result of a  self-assembly process 
that starts with a  fixed-size $L$-shaped seed supertile and proceeds as in 
Figure~\ref{fig:simulationmethod}, {\it (i)}, until one coloured rectangle is formed.
 Note that Wang Tile Systems can be seen as generating  
(potentially infinite) languages of such coloured patterns where the $L$-shaped seed is of
 an arbitrary size and is  generated  starting from a single-tiled seed
with uncoloured North and West edges, and is extended by tiles with  uncoloured North
 or West edges, as shown in Figure~\ref{fig:simulationmethod}, {\it (ii)}.

\begin{figure}[ht]
\begin{center}
\includegraphics[scale=.5]{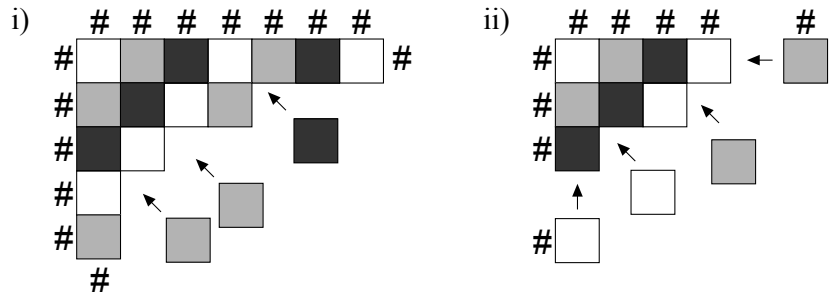}
\caption{{\it (i)} The self-assembly of a single {\it coloured pattern}, starting with a fixed-size $L$-shaped seed.
{\it (ii)}  The process of generating a {\it picture} in the 
language of a {\it Wang Tile System}.} 
\label{fig:simulationmethod}
\end{center}
\end{figure}

\section{Hypergraph Automata}

Let $f\colon A \to B$ be a function and let $A'\sse A$.
The {\em restriction} of $f$ to $A'$ is $f|_{A'}\colon A' \to B$
such that $f|_{A'}(x) = f(x)$ for all $x\in A'$.
For any set $A$ we let $id_A\colon A\to A$ denote the {\em identity}. 
When the set $A$ a is clear from the context, we will omit the subscript and simply write $id$.

Let $\Sigma$ be an alphabet. A {\it pseudo-picture graph} is a directed labelled graph $G = (N,E_v\cup E_h,\pi)$ 
where $N$ is a finite set of nodes,
$E_v, E_h\sse N\times N$ are two sets of edges such that $E_v\cap E_h =\es$,
and $\pi\colon N\to \Sigma$ is the label function.
Edges from $E_v$ and $E_h$ will frequently be denoted by $\overset{v}\longrightarrow$  and $\overset{h}\longrightarrow$, respectively.
The {\em node-induced subgraph} of $G$ by a subset $N'\sse N$ is defined
as the graph $(N',E_v'\cup E_h',\pi|_{N'})$ where
$E_v' = \sett{(x,y)\in E_v}{x,y\in N'}$ and $E_h' = \sett{(x,y)\in E_h}{x,y\in N'}$.
A graph $G'$ is called a {\em full subgraph} of $G$
if for some $N'\sse N$ it is the node-induced subgraph of $G$ by $N'$.

A pseudo-picture graph $G = (N,E_v\cup E_h,\pi)$ is an ($n \times m$-){\it picture graph} (for $n,m\in\N$) if there is a bijection $f_G\colon N \to [N] \times [M]$ such that for $x,y \in N$, we have $(x,y)\in E_v$ if and only if $f_G(x)+(1,0) = f_G(y)$, and $(x,y)\in E_h$ if and only if $f_G(x)+(0,1) = f_G(y)$.
We want to stress that we do not use Cartesian coordinates;
our pictures are defined as matrices, hence, incrementing the first coordinate corresponds to a step downwards, and incrementing the second coordinate corresponds to a step rightwards.
In other words, the nodes of a picture graph $G$ can be embedded in $\N^2$ such that every edge in $E_v$ has length $1$ and points downwards, every edge in $E_h$ has length $1$ and points rightwards, and every two nodes with Euclidean distance $1$ are connected by an edge.
Note that if a pseudo-picture graph is an $n\times m$-picture graph, it cannot be an $n'\times m'$-picture graph with $n\neq n'$ or $m\neq m'$, and the function $f_G$ is unique.
If $G$ is a picture graph, we call $e\in E_v$ a {\em vertical edge} and $e\in E_h$ a {\em horizontal edge}.
The set of all picture graphs is denoted by $\cG$.
Every $n\times m$-picture graph $G = (N,E_v\cup E_h,\pi)$ {\em represents} a picture $p(G)\in \Sig^{**}$ with $h(p(G)) = n$ and $w(p(G)) = m$.
More precisely, for all $(i,j)\in [n] \times [m]$ we let
$p(G)_{(i,j)} = \pi(f_G^{-1}(i,j))$.
Hence, $p\colon \cG \to \Sigma^{**}$ can be seen as a function.

A connected pseudo-picture graph $G$ is called a {\em subgrid} if it is a full subgraph of a picture graph $G'$.
We also say $G$ is  a subgrid of $G'$.
 
A {\em hypergraph} \cite{h001} is a triple $H=(N,E,f)$ where $N$ is the finite set of nodes,
$E$ is the finite set of {\em hyperedges},
and $f\colon E\to \cP(N)$ is a function assigning to each hyperedge
a set of nodes;
the same set of nodes may be assigned to two distinct hyperedges.
For every hyperedge $e \in E$, we let 
\[I_H(e) = \sett{x \in N}{\exists e'\in E\sm\{e\}\colon x\in f(e)\cap f(e')}\]
be the set of {\em intersecting nodes} in $f(e)$.
Rozenberg \cite{h001} introduced {\it hypergraph automata} to describe graph languages. Here, we modified Rozenberg's definition in order to study pseudo-picture graphs. Figure~\ref{fig:example1} shows an one dimensional example of an automaton based on a hypergraph. The formal definition is as follows.

\begin{figure}[ht]
\begin{center}
\includegraphics[width=0.7\textwidth]{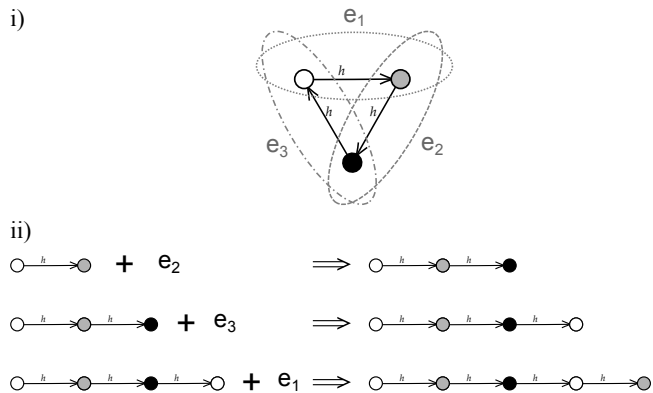}
\caption{The hypergraph in part i) consists of three hyperedges. Intuitively, a derivation of the hypergraph automaton starts from the initial hyperedge $e_1$ and in each step adds the underlying graph of another hyperedge to the current configuration. In part ii) we start with the hyperedge $e_1$, afterwards, the underlying graph of the hyperedge $e_2$ is added to the current configuration. In the next step, the underlying graph of hyperedge $e_3$ is added. In the last step, the hyperedge $e_1$ is reused. Repeating this process, an arbitrary long chain of nodes, forming a white-grey-black pattern, can be constructed using only the three hyperedges.}
\label{fig:example1}
\end{center}
\end{figure}

\begin{definition}
A {\em self-assembly (SA) hypergraph automaton} is a tuple $A = (N,E,f,d,G,E_0)$
where $H=(N,E,f)$ is a hypergraph, called the {\em underlying hypergraph},
$d: E \rightarrow I_H(e) \times I_H(e)$ is the {\em transition function} assigning to each hyperedge $e\in E$ a transition 
$Q_1\to Q_2$ with $Q_1,Q_2\sse I_H(e)$, 
$G$ is a pseudo-picture graph with node set $N$ called the {\em underlying graph}, and
$E_0\sse E$ is the set of {\em initial hyperedges}.
\end{definition}

Every hyperedge $e\in E$ defines a graph $G_e$ which is the subgraph of $G$ induced by $f(e)$.  
For $d(e) = Q_1\to Q_2$ we call $Q_1$ and $Q_2$ the {\em incoming active nodes} and {\em outgoing active nodes} of $G_e$, respectively.
In order for the hypergraph automaton to be well-defined, we require that $G_e$ is connected and that the subgraph of $G_e$ induced by its incoming active nodes is connected, too, for all $e\in E$.
If $e\in E_0$, then $G_e$ is also called an {\em initial graph}.

A {\em configuration} of the hypergraph automaton $A$
is a triple $(M,O,g)$ where
$M=(N_M,E_{M,v}\cup E_{M,h},\pi_M)$ is a subgrid, $O\sse N_M$ is the set of {\em active nodes}, and $g\colon N_M\to N$ is a function such that $\pi_M(x) = \pi(g(x))$ for all $x\in N_M$.
The set $N_M$ consists of (possibly multiple) copies of nodes from $N$ and the function $g$ assigns to each node in $N_M$ its original node in $N$.
An edge $(x,y)\in E_{M,h}$ is a copy of the edge $(g(x),g(y)) \in E_h$ and $(x,y)\in E_{M,v}$ is a copy of the edge $(g(x),g(y)) \in E_v$.
However, for two nodes $x$ and $y$ in $M$, if their originals $g(x)$ and $g(y)$ are connected by a horizontal (or vertical) edge, this does not imply that $x$ and $y$ are connected by a horizontal (or vertical) edge.

Let $(M_1,O_1,g_1)$ be a configuration with $M_1 = (N_1,E_{1,v}\cup E_{1,h},\pi_1)$ and let $e\in E$ be a hyperedge with $d(e) = Q_1 \to Q_2$.
If there exists a non-empty subset $P\sse O_1$ such that $g_1|_P$ forms a graph-isomorphism from the subgraph of $M_1$ induced by $P$ to the subgraph of $G_e$ induced by the incoming active nodes $Q_1$, then the hyperedge $e$ defines a {\em transition} or {\em derivation step}
\[
	(M_1,O_1,g_1) \underset A\to (M_2,O_2,g_2)
\]
where, informally speaking, the resulting graph $M_2$ consists of joining together the graphs $M_1$ and $G_e$ by identifying every node $x \in P$ with the corresponding node $g_1(x) \in Q_1$.
The active nodes $O_2$ in $M_2$ are the active nodes $O_1\sm P$ in $M_1$ plus the outgoing active nodes $Q_2$ in $G_e$, see Figure~\ref{fig:transition:step}.
We also say that $(M_2,O_2,g_2)$ is the result of {\em gluing} the hyperedge $e$ to $(M_1,O_1,g_1)$. 
Formally, the configuration $(M_2,O_2,g_2)$ where $M_2 = (N_2,E_{2,v}\cup E_{2,h},\pi_2)$ is constructed as follows.
Let $N' = \sett{x'}{x\in f(e)\sm Q_1}$ be a set containing a copy of each node from $G_e$ except for the incoming active nodes such that $N'\cap N_1 = \es$.
Let $N_2 = N_1\cup N'$ and let $g_2\colon N_2 \to N$ such that $g_2(x) = g_1(x)$ for $x\in N_1$ and $g_2(x') = x$ for $x'\in N'$.
An edge $(x,y)$ belongs to $E_{2,v}$ if $(x,y)\in E_{1,v}$ or $x,y\in P\cup N'$ and $(g(x),g(y))\in E_v$;
an edge $(x,y)$ belongs to $E_{2,h}$ if $(x,y)\in E_{1,h}$ or $x,y\in P\cup N'$ and $(g(x),g(y))\in E_h$.
Naturally, $\pi_2(x) = \pi(g_2(x))$ for all $x\in N_2$ and $O_2 = ( O_1\sm P ) \cup \sett{x'\in N'}{x\in Q_2}$.
The reflexive and transitive closure of $\underset{A}\to$ is denoted by $\overset{*}{\underset A\to}$ and called a {\em derivation}.

\begin{figure}[ht]
\begin{center}
\includegraphics[scale=.5]{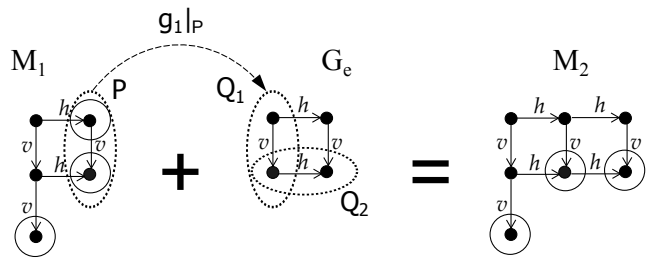}
\caption{A transition $(M_1,O_1,q_1)\rightarrow_A (M_2,O_2,q_2)$ joins together the graphs $M_1$ and $G_e$  by  identifying every node  $x\in P$  with the corresponding node $g_1(x)\in Q_1$. The set $O_2$ of the active nodes of the
new configuration $M_2$ consists of the nodes of the union of 
the active nodes  in $O_1\sm P$  with the outgoing active nodes $Q_2$ of $G_e$. The active nodes of $M_1$ and $M_2$
 are represented as circled nodes.
}

\label{fig:transition:step}
\end{center}
\end{figure}

For $e\in E_0$ we let $O_e$ such that $d(e) = Q_1\to O_e$ and we call
the configuration $(G_e,O_e,id)$ an {\em initial configuration} of $A$.
A {\em final configuration} is a configuration $(M,\es,g)$ without active nodes.
The graph language accepted by the SA-hypergraph automaton $A$ is
\[
	\cL(A) = \sett{M \in \cG}{\exists e\in E_0\colon (G_e,O_e,id)\trans(M,\es,g)}.
\]
Note that $\cL(A)$ contains picture graphs only.
The {\em picture language associated to the graph language} $\cL(A)$ is the language $p(\cL(A))$.

\begin{remark}\label{rem:subgrid}
Since we only talk about picture graphs, we can assume that for every hyperedge $e \in E$ the underlying graph $G_e$ is a subgrid, or $e$ can be removed from the set $E$.
\end{remark}

\begin{example}
Figure~\ref{fig:patternHGA} shows an example of a self-assembled coloured pattern and  an SA-hypergraph automaton that accepts
that pattern. Part i) depicts a coloured self-assembled pattern. Parts ii) and iii) together depict the underlying graph of
the SA-hypergraph automaton that constructs the same pattern.
\begin{figure}[ht]
\begin{center}
\includegraphics[width=1.0\textwidth]{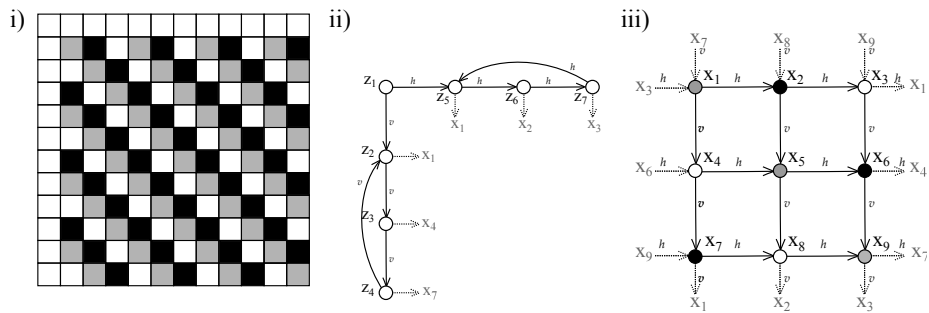}
\caption{
Part i) shows an example of coloured self-assembled pattern. Parts ii) and iii) 
together depict the underlying graph of the SA-hypergraph automaton that constructs the same pattern. Part ii)
 constructs the white top row and white left  column,  and part iii) constructs the coloured pattern.
}
\label{fig:patternHGA}
\end{center}
\end{figure}

The SA-hypergraph automaton for the example in Figure 5 is defined as follows. 
The SA-hypergraph automaton is 
$A = (N,E,f,d,G,E_0)$, 
where 
\begin{itemize}
\item $N=\{x_1,x_2,\dots,x_9,z_1, z_2, \dots, z_7\}$, 
\item $E=\{e_1,e_2, \dots,e_{16}\}$, 
\item function $f$ is defined such that

\begin{align*}
&f(e_1) = \{ x_1, x_2, x_4, x_5 \}, &
&f(e_2) = \{ x_2, x_3, x_5, x_6 \}, &
&f(e_3) = \{ x_3, x_1, x_6, x_4 \}, \\
&f(e_4) = \{ x_4, x_5, x_7, x_8 \}, &
&f(e_5) = \{ x_5, x_6, x_8, x_9 \}, &
&f(e_6) = \{ x_6, x_4, x_9, x_7 \}, \\
&f(e_7) = \{ x_7, x_8, x_1, x_2 \}, &
&f(e_8) = \{ x_8, x_9, x_2, x_3 \}, &
&f(e_9) = \{ x_9, x_7, x_3, x_1 \}, \\
&f(e_{10}) = \{ z_1,z_5,z_2,x_1\},  &
&f(e_{11}) = \{ z_5,z_6, x_1,x_2\}, &
&f(e_{12}) = \{ z_6,z_7,x_2,x_3\}, \\
&f(e_{13}) = \{ z_7,z_5,x_3,x_1\},  &
&f(e_{14}) = \{ z_2,x_1,z_3,x_4\},  &
&f(e_{15}) = \{ z_3,x_4, z_4,x_7\}, \\
&f(e_{16}) = \{ z_4,x_7,z_2,x_1\}.
\end{align*}
\item 
For each hyperedge in ii),  the function $d$ describing the active areas where we can glue new hyperedges
 is defined as to build a horizontal (vertical) chain of nodes  that models the top row (left column) of tiles.
 \begin{align*}
 & d(e_{11}) = \{ z_5, x_1\} \rightarrow \{ z_6,x_1,x_2\}, \; & 
 & d(e_{12}) = \{ z_6,x_2\} \rightarrow \{ z_7,x_2,x_3\}, \\
 & d(e_{13}) = \{ z_7,x_3\} \rightarrow \{ z_5,x_1,x_3\}, \; &
 & d(e_{14}) = \{ z_2, x_1\} \rightarrow \{ z_3,x_1,x_4\}, \\
 & d(e_{15}) = \{ z_3,x_4\} \rightarrow \{ z_4,x_4,x_7\}, \; &
 & d(e_{16}) = \{ x_7,z_4\} \rightarrow \{ z_2,x_1,x_2\}.
  \end{align*}
The backward edges e.g.\ $(x_3,x_1)$, $(x_4,x_6)$,  $(x_7,x_9)$, and $(z_7,z_5)$, make it possible to reuse the hyperedges to
 build a periodic pattern. 

For each hyperedge in iii), the  function $d$  changes the active input nodes  (top-left, bottom-left,
 and top-right) to  the new set of active nodes  (top-right, bottom-left, and bottom-right), signifying the change of 
 the places where the new hyperedges can be glued.

\begin{align*}
&d(e_1) = \{ x_1, x_2, x_4\} \rightarrow \{ x_2,x_4,x_5\},  &
&d(e_2) = \{ x_2, x_3, x_5\} \rightarrow \{ x_3,x_5,x_6\},  \\
&d(e_3) = \{ x_3, x_1, x_4\} \rightarrow \{ x_1,x_6,x_4\},  &
&d(e_4) = \{ x_4, x_5, x_7\} \rightarrow \{ x_5,x_7,x_8\},  \\
&d(e_5) = \{ x_5, x_6, x_8\} \rightarrow \{ x_6,x_8,x_9\},  &
&d(e_6) = \{ x_6, x_4, x_9\} \rightarrow \{ x_4,x_9,x_7\},  \\
&d(e_7) = \{ x_7, x_8, x_1\} \rightarrow \{ x_8,x_1,x_2\},  &
&d(e_8) = \{ x_8, x_9, x_2\} \rightarrow \{ x_9,x_2,x_3\},  \\
&d(e_9) = \{ x_9, x_7, x_3\} \rightarrow \{ x_7,x_3,x_1\},  &
&d(e_{10}) = \{ z_1, z_5, z_2\} \rightarrow \{ z_5,x_1,z_2\}.
\end{align*}

\item
Parts ii) and iii)  depict the underlying graphs of the white $\Gamma$-shaped top and left border of the pattern, 
and the white-grey-black part of the pattern respectively. 
\item $E_0 = \{ e_{10} \}$
\end{itemize}

The SA-hypergraph automaton $A$ starts from the top-left white tile, corresponding to $E_0 = \{ e_{10} \}$. Afterwards,
 the automaton  continues the construction with the hyperedges in the top row or the left column.
 The construction of the white-grey-black part starts after the construction  of the white top row and left column. 
Figure~\ref{fig:example2} shows an example of possible transitions of the SA-hypergraph automaton $A$.

\begin{figure}[ht]
\begin{center}
\includegraphics[width=0.93\textwidth]{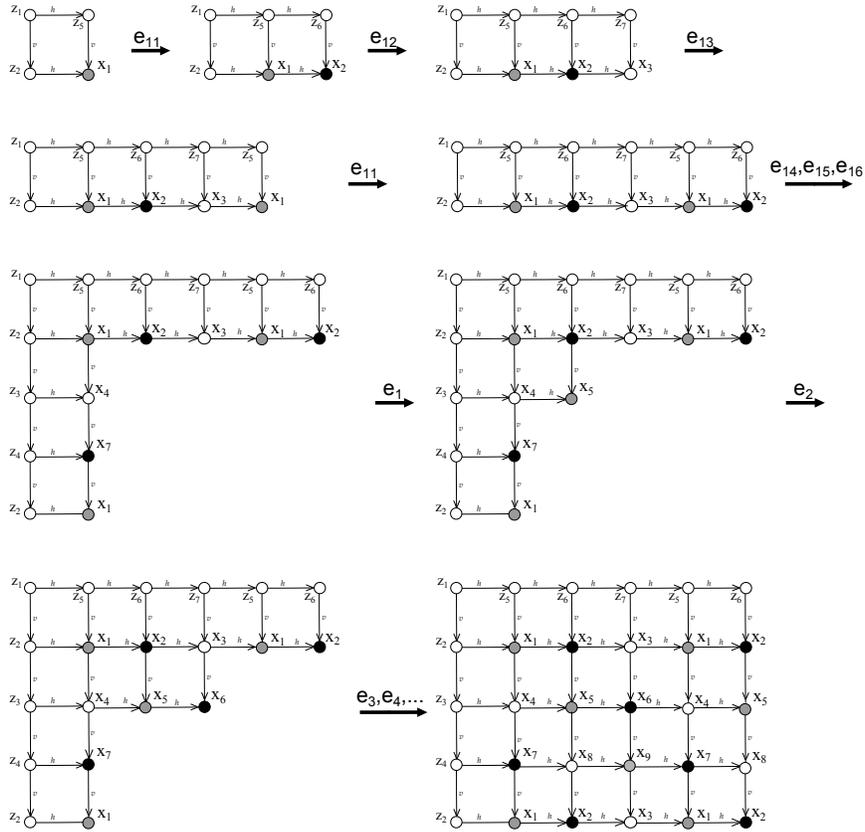}
\caption{
In this example, the construction of a picture graph from Figure 5 is explained. At each step, one hyperedge or a sequence of hyperedges is glued. 
}
\label{fig:example2}
\end{center}
\end{figure}
\end{example}

The concept of hypergraph automata has been introduced by Rozenberg in 1982 \cite{h001}.
Our definition of SA-hypergraph automata is a variant of the original definition with the following modifications.
Firstly, we start from a set of initial graphs whereas the original definition used a single initial graph.
For unlabelled graphs both models are capable of accepting the same class of graph languages, as long as one makes an exception for the empty graph.
However, for labelled graphs a single initial graph is not sufficient;
e.\,g., if a language $L$ of labelled graphs contains one graph $A$ where every node is labelled by $a$ and one graph $B$ where every node is labelled by $b$, then $L$ cannot be generated from the same initial graph since $A$ and $B$ do not have a common non-empty isomorphic subgraph.
Secondly, we use final configurations in order to accept only some of the graphs that
can be generated by rules from the initial graph.
In the original definition, for simplicity, final configurations were omitted and every graph which can be generated from the initial graph belonged to the accepted language.
Thirdly, it seemed more convenient to us to use the notion of active nodes rather than active intersections.

\section{Some Examples}
In this section, we provide three example SA-hypergraph automata and illustrate their relation to self-assembly systems.
Our findings, presented in Section~\ref{sec:results}, do not build upon this section. 
In all examples, every node in the underlying graph has a distinct colour which, for simplicity, is the same as the identifier of the node.

The following examples shows a SA-hypergraph automaton to accept the pictures in Figure~\ref{fig:examples01} part $a)$. This example shows that SA-hypergraph automata can accept a picture language with a simple description. The SA-hypergraph automaton in this example has 8 nodes and 3 hyperedges; the equivalent tile system needs 8 tile types. 

\begin{example}
The SA-hypergraph automaton for the example in Figure ~\ref{fig:examples01} is defined as follows. 
The SA-hypergraph automaton is 
$A = (N,E,f,d,G,E_0)$, 
where 
\begin{itemize}
\item $N=\{x_1,x_2,x_3,x_4,x_5,x_6,x_7,x_8\}$, 
\item $E=\{e_1,e_2, e_{3}\}$, 
\item function $f$ is defined such that
\begin{align*}
&f(e_1) = \{ x_1, x_2,x_3,x_4\}, \\
&f(e_2) = \{ x_3,x_4,x_5,x_6\}, \\
&f(e_3) = \{  x_3, x_4 ,x_7,x_8\}
\end{align*}
\item function $d$ is defined such that
\begin{align*}
&d(e_1) = \{ x_1, x_2\} \rightarrow \{x_3,x_4\}, \\
&d(e_2) = \{ x_3,x_4 \} \rightarrow \{ \}, \\
&d(e_3) = \{  x_3, x_4 \} \rightarrow \{ \}
\end{align*}
\item underlying graph is shown in Figure  ~\ref{fig:examples01} part $b$.
\item $E_0 = \{e_1\}$

\end{itemize}
\label{exmple:examples01}
\end{example}

\begin{figure}[ht]
\begin{center}
\includegraphics[scale=.3]{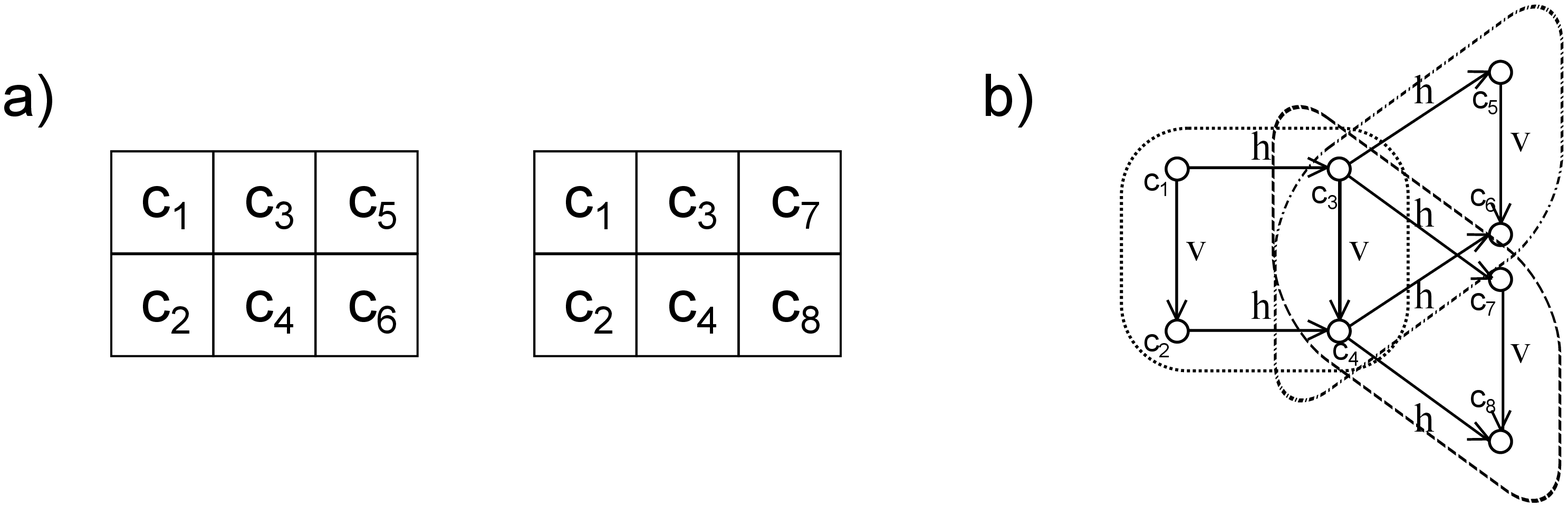}
\caption{Part a) shows an example of language of coloured self-assembled patterns. Parts b) depicts the underlying graph of the SA-hypergraph automaton that constructs the same pattern.}
\label{fig:examples01}
\end{center}
\end{figure}

Example ~\ref{exmple:examples02} shows a simple picture language containing two 2D-words. 
The SA-hypergraph automaton uses two overlapping hyperedges with different active inputs and outputs. Therefore, the number of nodes in this SA-hypergraph automaton will be less than the number tiles in a tile assembly system which recognizes the same language. The SA-hypergraph automaton in this example has 4 nodes and 3 hyperedges. An equivalent tile assembly system needs at least 6 tile types.  

\begin{example}
The SA-hypergraph automaton for the example in Figure~\ref{fig:examples02} is defined as follows. 
The SA-hypergraph automaton is 
$A = (N,E,f,d,G,E_0)$, 
where 
\begin{itemize}
\item $N=\{x_1,x_2,x_3,x_4\}$, 
\item $E=\{e_1,e_2, e_{3}\}$, 
\item function $f$ is defined such that
\begin{align*}
&f(e_1) = \{ x_1, x_2\}, \\
&f(e_2) = \{ x_1,x_2\}, \\
&f(e_3) = \{  x_1, x_2 ,x_3,x_4\}
\end{align*}
\item function $d$ is defined such that
\begin{align*}
&d(e_1) = \{ x_1\} \rightarrow \{ \}, \\
&d(e_2) = \{ x_1\} \rightarrow \{x_1,x_2 \}, \\
&d(e_3) = \{  x_1, x_2 \} \rightarrow \{ \}
\end{align*}
\item underlying graph is shown in Figure~\ref{fig:examples02} part $b)$.
\item $E_0 = \{e_1,e_2\}$

\end{itemize}

\label{exmple:examples02}
\end{example}

\begin{figure}[ht]
\begin{center}
\includegraphics[scale=.3]{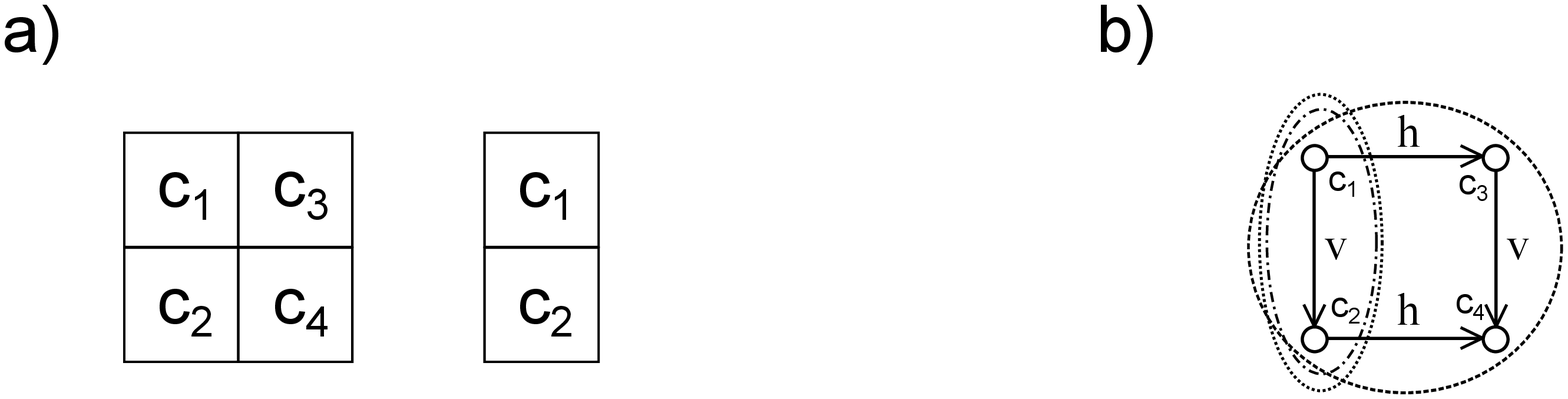}
\caption{Part a) shows an example of language of coloured self-assembled patterns. Parts b) depicts the underlying graph of the SA-hypergraph automaton that constructs the same pattern.}
\label{fig:examples02}
\end{center}
\end{figure}
Example ~\ref{exmple:examples03} shows a language with an infinite number of one dimensional pictures. 
The SA-hypergraph automaton uses three hyperedges to build the chain, moreover, one more heyperedge is used to make the final configurations. Therefore, the number of nodes in this SA-hypergraph automaton will be less than the number tiles in a tile assembly system which recognizes the same language. The SA-hypergraph automaton in this example has 3 nodes and 4 hyperedges. Whereas an equivalent tile assembly system needs at least 5 tile types (one tile type to start, 3 tile type to build the chain, and one tile type to stop).  

\begin{figure}[ht]
\begin{center}
\includegraphics[scale=.3]{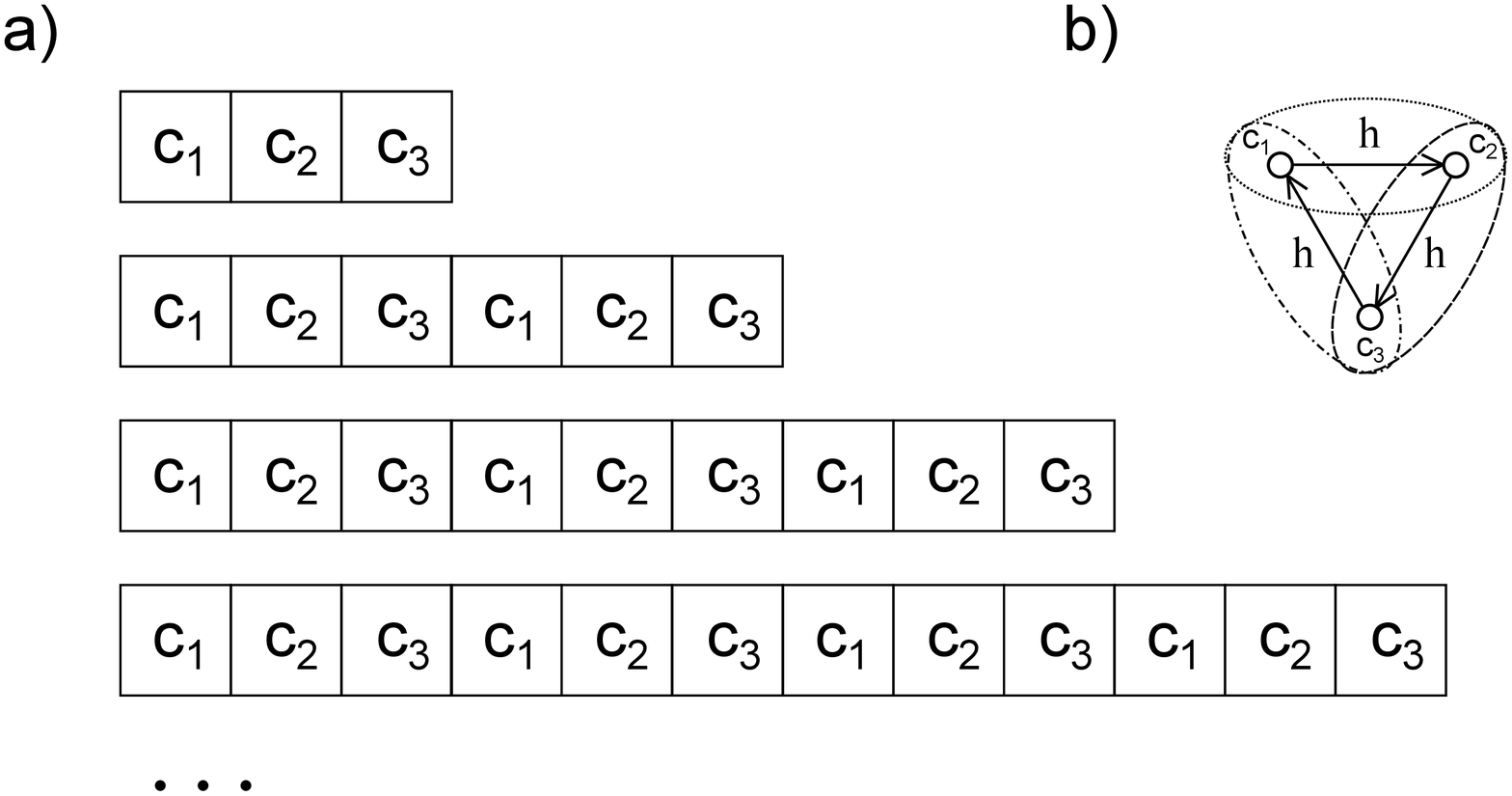}
\caption{Part a) shows an example of language of coloured self-assembled patterns. Parts b) depicts the underlying graph of the SA-hypergraph automaton that constructs the same pattern.}
\label{fig:examples03}
\end{center}
\end{figure}

\begin{example}
The SA-hypergraph automaton for the example in Figure ~\ref{fig:examples03} is defined as follows. 
The SA-hypergraph automaton is 
$A = (N,E,f,d,G,E_0)$, 
where 
\begin{itemize}
\item $N=\{x_1,x_2,x_3\}$, 
\item $E=\{e_1,e_2, e_{3},e_4\}$, 
\item function $f$ is defined such that
\begin{align*}
&f(e_1) = \{ x_1, x_2\}, \\
&f(e_2) = \{ x_2,x_3\}, \\
&f(e_3) = \{  x_3, x_1\}
&f(e_4) = \{ x_2,x_3\}, \\
\end{align*}
\item function $d$ is defined such that
\begin{align*}
&d(e_1) = \{ x_1\} \rightarrow \{ x_2\}, \\
&d(e_2) = \{ x_2\} \rightarrow \{x_3\}, \\
&d(e_3) = \{  x_3\} \rightarrow \{x_1 \}
&d(e_4) = \{ x_2\} \rightarrow \{ \}, \\
\end{align*}
\item underlying graph is shown in Figure  ~\ref{fig:examples03} part $b)$.
\item $E_0 = \{e_1\}$

\end{itemize}

\label{exmple:examples03}
\end{example}
\section{Hypergraph Automata for Picture Languages}\label{sec:results}

In this section, we establish a strong connection between (WTS-)recognizable picture languages and picture graph languages that can be accepted by SA-hypergraph automata.
We prove that the self-assembly of a Wang Tile System can be simulated by an SA-hypergraph automaton, see Theorem~\ref{thm:one}.
The main idea is to start the tiling in the top left corner of a tiled picture and then extend the tiled picture downwards and rightwards, just as in Figure~\ref{fig:simulationmethod}.
Our converse result is slightly weaker: the picture language $L=p(\cL(A))$, associated to the graph language accepted by an SA-hypergraph automaton $A$, is WTS-recognizable if $A$ does not contain a {\em strong loop}, see Theorem~\ref{thm:two}.
The restriction for $A$ not to contain a strong loop is a natural assumption as strong loops cannot be used in any derivation that accepts a picture graphs.

\begin{theorem}\label{thm:one}
For any recognizable picture language $L$ there is a SA-hypergraph automaton $A$
such that the picture language associated to the graph language $\cL(A)$ is $L$.
\end{theorem}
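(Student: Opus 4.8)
The plan is to reduce to Wang Tile Systems and to simulate the tiling process by growing a picture graph outward from its top-left corner, exactly as in Figure~\ref{fig:simulationmethod}. Since $L$ is recognizable it is accepted by some Wang Tile System $W=(\Sigma,C,\Theta)$, so it suffices to build an SA-hypergraph automaton $A$ with $p(\cL(A))=\cL(W)$. For the underlying graph $G$ of $A$ I would take as nodes four indexed copies $(t,(a,b))$, $(a,b)\in\{0,1\}^2$, of each tile $t\in\Theta$, all labelled by $\lambda(t)$, with a horizontal edge from $(t,(a,b))$ to $(t',(a,1-b))$ whenever $\sigma_E(t)=\sigma_W(t')\in C$ and a vertical edge from $(t,(a,b))$ to $(t',(1-a,b))$ whenever $\sigma_S(t)=\sigma_N(t')\in C$. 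The parity index guarantees $E_v\cap E_h=\es$ (horizontal edges flip the second coordinate, vertical edges the first) while still permitting every rectangular grid over $\Theta$ to appear, by placing at cell $(i,j)$ the copy with parity $(i\bmod 2,\ j\bmod 2)$; the colour-compatibility conditions on the edges are chosen precisely so that any picture graph built from such copies is automatically colour-matched on every interior adjacency.

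The hyperedges of $A$ fall into a few families. An \emph{initial} family contains, for each tile $t$ with $\sigma_N(t)=\sigma_W(t)=\#$, a single-node hyperedge placing a copy of $t$ and making it active; $E_0$ is this family. \emph{Border-extension} families glue the next tile of the top row (one with $\sigma_N=\#$) to the right of the current rightmost top-row node, and the next tile of the left column (one with $\sigma_W=\#$) below the current bottom left-column node. An \emph{interior-fill} family consists of $2\times2$ subgrids whose transition replaces a $\Gamma$-shaped triple of active nodes (the top-left, top-right and bottom-left corners of the subgrid) by the complementary triple (top-right, bottom-left, bottom-right), advancing a staircase-shaped active frontier one cell toward the bottom-right, just as the hyperedges $e_1,\dots,e_9$ do in the worked example of Figure~\ref{fig:patternHGA}. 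Finally, \emph{sealing} families deactivate the cells of the bottom row and of the right column: a cell may be sealed only if the colour on its outward side is $\#$ (south for a bottom cell, east for a right cell, both at the bottom-right corner), and a right-column (resp.\ bottom-row) cell may be sealed only together with its already-built neighbour below it (resp.\ to its right), which stays active for the next sealing step. This is what makes a derivation terminate with an empty active set exactly when the bottom and right borders have been reached through tiles that are legitimately $\#$ there; a handful of further hyperedges cover the degenerate pictures with a single row or a single column.

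With this automaton I would establish $p(\cL(A))=\cL(W)$ in both directions. For $p(\cL(A))\sse\cL(W)$: if $M\in\cL(A)$ then $M$ is a picture graph produced by some derivation, and $m(i,j)=g(f_M^{-1}(i,j))$ is a candidate Wang tiling of $p(M)$; every interior adjacency of $m$ is colour-matched because, by the gluing rules, each edge of $M$ is a copy of an edge of $G$, and each border carries the correct $\#$-edges by an invariant on which hyperedges can have created, and which can have sealed, a cell that lacks a neighbour on a given side — so $m$ is mismatch free and $p(M)\in\cL(W)$. For $\cL(W)\sse p(\cL(A))$: given $p\in\cL(W)$ and a Wang tiled version $\bar p$ of $p$, I would exhibit a derivation that lays the cells of $\bar p$ down in a fixed order — the top row left to right, the left column top to bottom, then the remaining cells along successive $\Gamma$-staircases — checking at each step that the required $\Gamma$-shaped set of active nodes is present and isomorphic to the incoming active nodes of the chosen hyperedge, and finishing by sealing the bottom row and right column, which succeeds precisely because $\bar p$ is $\#$ along those borders; the final configuration is a picture graph $M$ with $p(M)=p$.

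The main obstacle I anticipate is the combinatorial bookkeeping around the active frontier and the four borders, rather than anything conceptually deep. One must choose the transition functions $d$ so that the active set always stays in the $\Gamma$-staircase shape that the interior-fill hyperedges expect, and then prove the invariant that every configuration occurring on the way to an accepted picture graph has a staircase-shaped frontier; this is what rules out "rogue" derivations that, for instance, fill an interior tile into the first row or column, or seal a cell before its neighbours are in place, and still manage to reach a picture graph. The delicate points are the corners far from the starting corner, where two or three border conditions coincide and cannot be checked purely locally — which is why the sealing hyperedges are set up to insist on the presence of the appropriate already-built neighbours — together with the degenerate one-row and one-column pictures, which need their own small families of hyperedges. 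Once the frontier invariant is in place, both inclusions follow routinely.
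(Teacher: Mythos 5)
Your construction follows essentially the same route as the paper's proof: pass to a Wang Tile System, take the nodes of the underlying graph to be indexed copies of the tiles, and grow the picture graph from the top-left corner using single-node initial hyperedges, two-node hyperedges for the top row and the left column, and $2\times 2$ hyperedges with a $\Gamma$-shaped triple of incoming active nodes for the interior. Whether deactivation of the bottom row and right column is folded into the $2\times2$ transitions (as in the paper) or delegated to separate sealing hyperedges (as you propose) is a matter of bookkeeping.

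The genuine gap is in the indexing. Your four parity copies $(t,(a,b))$ with $(a,b)\in\{0,1\}^2$ do guarantee $E_v\cap E_h=\es$, but they do not prevent spurious edges \emph{inside the node set of a single hyperedge}, and these are fatal: $G_e$ is by definition the subgraph of $G$ \emph{induced} by $f(e)$, and a gluing step copies every $G$-edge between nodes of $P\cup N'$ into the new configuration, which must remain a subgrid. Take a WTS containing an interior tile $t$ all four of whose sides carry the same colour $c\in C$ (any natural WTS for the language of all-$a$ rectangles will do). Since $\sigma_E(t)=\sigma_W(t)=c$, your edge rule puts both $\bigl(t,(0,0)\bigr)\to\bigl(t,(0,1)\bigr)$ and $\bigl(t,(0,1)\bigr)\to\bigl(t,(0,0)\bigr)$ into $E_h$, because a horizontal step from a copy with second coordinate $1$ lands on a copy with second coordinate $1-1=0$. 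Any interior-fill hyperedge whose node set contains these two copies as horizontally adjacent cells therefore induces a $2$-cycle, so $G_e$ is not a subgrid, the triple obtained by gluing it is not a configuration, and the derivations you need for $\cL(W)\sse p(\cL(A))$ cannot be carried out. The underlying reason is that mod $2$, two cells that are two columns (or two rows) apart receive the same copy, so the copy sitting at the top-left of a $2\times2$ block is also a legal right-neighbour of its top-right tile. The paper forestalls exactly this by using \emph{nine} copies of each tile, indexed mod $3$ in each coordinate with the index encoded into the edge colours, so that two occurrences of the same copy are always a multiple of three rows and columns apart; this is what licenses the claim that each $G_e$ contains precisely the intended edges and nothing more. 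Replacing your $\{0,1\}^2$ index by $\{0,1,2\}^2$ (and proving the resulting no-spurious-edge property) closes the gap; the rest of your argument, including the staircase-frontier invariant and the special treatment of one-row and one-column pictures, then goes through along the same lines as the paper.
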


\begin{proof}
Let $V=(\Sigma,C',\Theta')$ be a Wang Tile System that recognizes the picture language $L$, that is $L = \cL(V)$.
We will slightly modify the WTS $V$ such that it fulfils a certain property as described in the following.
We define a WTS $W=(\Sigma,C,\Theta)$ which recognizes $L$ and such that any two copies of a tile $t\in\Theta$ in a tiling of $W$ must have a row- and a column-distance which is a multiple of $3$.
More precisely, for a Wang tiled version $\bar p$ of a picture $p\in\cL(W)$ where a tile $t\in\Theta$ appears at two positions $t= \bar p_{(i,j)} = \bar p_{(i',j')}$, we have that $3$ divides $|i-i'|$ as well as $ |j-j'|$.
This is achieved by using $9$ copies of every tile from $V$ in $W$;
we let $\Theta = \Theta'\times\set{0,1,2}\times\set{0,1,2}$.
We will ensure that a tile $(t,i,j)\in\Theta$ can only appear at position $(i',j')$ if $i = i'\bmod 3$ and $j = j'\bmod 3$.
This property is achieved by defining the glues as $C = C'\times\set{0,1,2}$, and  for  $t = (s,i,j)\in \Theta'\times\set{0,1,2}\times\set{0,1,2}$ we let
\begin{align*}
	\lambda(t) &= \lambda(s), &
	\sigma_S(t) &= (\sigma_S(s),i), &
	\sigma_N(t) &= (\sigma_N(s),(i-1)\bmod 3), \\
	&&
	\sigma_E(t) &= (\sigma_E(s),j), &
	\sigma_W(t) &= (\sigma_W(s),(j-1)\bmod 3).
\end{align*}
Note that a tiled picture $\bar p$ of $W$ can be converted into a tiled picture $\bar q$ of $V$ such that the corresponding pictures $p$ and $q$ coincide by applying the mapping $(t,i,j)\mapsto t$ to every tile in $\bar p$.
Vice versa, a tiled picture $\bar q$ in $V$ can be converted into a tiled picture $\bar p$ in $W$ such that the corresponding pictures $p$ and $q$ coincide by applying the mapping $\bar p_{(i,j)} \mapsto (p_{(i,j)},i\bmod 3,j\bmod 3)$ to every position in $\bar p$.

The modification of $V$ will become of importance later in the proof:
We need to ensure that for a $2\times 2$ square of matching tiles $t_1,t_2,t_3,t_4$, it is not possible to directly attach another copy of any of $t_1,t_2,t_3,t_4$ to this square.

We will define a SA-hypergraph automaton $A= (N,E,f,d,G,E_0)$ which simulates the assembly of a tiled picture from $L = \cL(W)$ as described in Figure~\ref{fig:simulationmethod}. 

Let $N$ be a set of nodes such that $\abs{N} = \abs{\Theta}$ and let $\theta\colon N\to\Theta$ be a bijection.
For each node $x\in N$ there is a {\em corresponding tile} $\theta(x)$ and vice versa.
Let $N_T$, $N_R$, $N_B$, $N_L$ be the set of nodes which correspond to tiles on the top, right, bottom, left border of a tiled picture, respectively:
\begin{align*}
	N_T &= \sett{x\in N}{\sigma_N(\theta(x)) = \#}, &
	N_R &= \sett{x\in N}{\sigma_E(\theta(x)) = \#}, \\
	N_B &= \sett{x\in N}{\sigma_S(\theta(x)) = \#}, &
	N_L &= \sett{x\in N}{\sigma_W(\theta(x)) = \#}.
\end{align*}

Let $G=(N,E_v\cup E_h,\pi)$ be the underlying graph of $A$.
The label function $\pi$ is naturally defined as
$\pi(x) = \lambda(\vartheta(x))$ for $x\in N$.
For all nodes $x,y\in N$ there is an edge $(x,y)\in E_h$ if and only if $\sigma_E(\vartheta(x)) = \sigma_W(\vartheta(y))\neq \#$ and either
$x,y\in N\sm(N_T\cup N_B)$ or $x,y\in N_T$ or $x,y\in N_B$;
there is an edge $(x,y)\in E_v$ if and only if $\sigma_S(\vartheta(x)) = \sigma_N(\vartheta(y))\neq \#$ and either $x,y\in N\sm(N_L\cup N_R)$ or $x,y\in N_L$ or $x,y\in N_R$.
This means if the east edge of a tile $t$ can attach to the west edge of tile $s$, then their corresponding nodes $x=\theta^{-1}(t)$ and $y=\theta^{-1}(s)$ are connected by an $h$-edge $(x,y)\in E_h$.
Analogously, if the south edge of a tile $t$ can attach to the north edge of tile $s$, then their corresponding nodes $x=\theta^{-1}(t)$ and $y=\theta^{-1}(s)$ are connected by an $v$-edge $(x,y)\in E_v$.

If $N_T\cap N_B \neq \es$ or $N_R \cap N_L \neq \es$, the language $\cL(W)$ possibly contains pictures $p$ with $h(p) = 1$ or $w(p)=1$, respectively, which can be seen as one-dimensional pictures.
These pictures have to be treated separately.
For now we assume that $N_T\cap N_B = N_R\cap N_L=\es$.

The hyperedges $E$ and the transition function $d$ define the possible transitions of $A$.
In every transition we add exactly one node to the graph of a configuration of $A$.
Our naming convention is that $x$ is the node which is attached in the derivation step and $y, y_1,y_2,y_3$ are incoming active nodes of the hyperedge.
Every graph containing only one node which corresponds to a tile in the top left corner is an initial graph.
In order to construct a picture graph which represents a picture in $\cL(W)$ we introduce three types of transitions, see Figure~\ref{fig:hyperedges}.
The transitions of type~I generate the top row of the graph and transitions of type~II generate the left column of the graph; both transition types keep every generated node active.
Transitions of type~III generate the rest of the graph: A node is attached if it has a matching east neighbour ($y_1$), a matching north neighbour ($y_3$), and these two nodes are connected by another node ($y_2$); unless we reach the right or bottom border of the graph the nodes $x$, $y_1$, and $y_3$ are active after using the transition.

\begin{figure}[ht]
\begin{center}
\includegraphics[scale=.5]{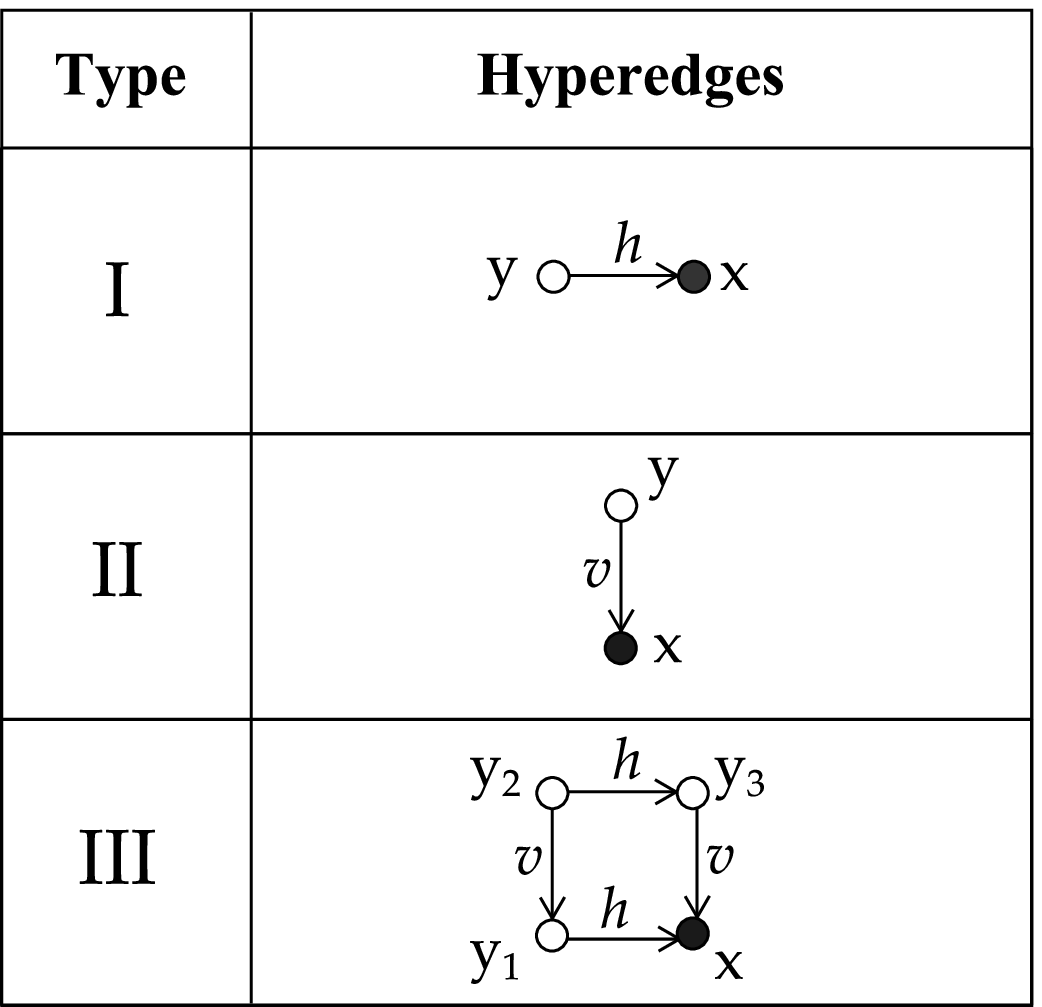}
\caption{The hyperedges in the SA-hypergraph automaton $A$ induce three different types of graphs. White nodes represent incoming active nodes of the hyperedges.}
\label{fig:hyperedges}
\end{center}
\end{figure}

Formally, we define the set of hyperedges $E$, the set of initial edges $E_0$, the function $f$, and the transition function $d$ as following:

{\bf Initial graphs: }
For each $x\in N_T\cap N_L$, corresponding to a tile in the top left corner,
we define a hyperedge $e_x\in E_0\sse E$ with associated nodes $f(e_x) = \set x$
and the transition function $d(e_x) = \es\to\set{x}$.

{\bf Type I: }
For all nodes $x,y\in N_T$, in the top row, such that $(x,y)\in E_h$,
we define a hyperedge $e_{x,y}\in E$ with associated nodes $f(e_{x,y}) = \set{x,y}$
and the derivation function $d(e_{x,y}) = \set y \rightarrow \set{x,y}$.

{\bf Type II: }
For all nodes $x,y\in N_L$, in the left column, such that $(x,y)\in E_v$,
we define a hyperedge $e_{x,y}\in E$ with associated nodes $f(e_{x,y}) = \set{x,y}$
and the derivation function $d(e_{x,y}) = \set y \rightarrow \set{x,y}$.

{\bf Type III: }
For all nodes $x\in N\sm(N_T\cup N_L)$ and $y_1,y_2,y_3\in N$
such that $(y_2,y_1),(y_3,x)\in E_v$ and $(y_2,y_3),(y_1,x)\in E_h$,
we define a hyperedge $e_{x,y_1,y_2,y_3}\in E$
with associated nodes $f(e_{x,y_1,y_2,y_3}) = \set{x,y_1,y_2,y_3}$ and
the derivation function
\begin{enumerate}
	\item $d(e_{x,y_1,y_2,y_3})=\set{y_1,y_2,y_3}\to \es$ if $x\in N_B\cap N_R$,
		(bottom right corner)
	\item $d(e_{x,y_1,y_2,y_3})=\set{y_1,y_2,y_3}\to \set{x,y_3}$ if $x\in N_B\sm N_R$,
		(bottom row)
	\item $d(e_{x,y_1,y_2,y_3})=\set{y_1,y_2,y_3}\to \set{x,y_2}$ if $x\in N_R\sm N_B$,
		(right column)
	\item $d(e_{x,y_1,y_2,y_3})=\set{y_1,y_2,y_3}\to \set{x,y_2,y_3}$ otherwise.
\end{enumerate}

Consider the graph $G_e$ which is induced by the hyperedge $e\in E$.
Depending on the type of the hyperedge $e$, the graph $G_e$ contains at least the edges shown in Figure~\ref{fig:hyperedges}.
However, by the modification of the Wang tile system $V$ above, we ensured that the graph $G_e$ contains exactly those edges shown in Figure~\ref{fig:hyperedges}.
Suppose one of the graphs $G_e$ would contain an edge $(x',y')$ which is not shown in Figure~\ref{fig:hyperedges}, then the tile corresponding to $y'$ could occur in two positions which are less than three rows and columns apart --- a property that was excluded by the modification.

We will show that $p(\cL(A)) = L$.
Firstly, consider an array $\bar p$ of tiles from $\Theta$ which is the Wang-tiled version of the picture $p\in\cL(W)$.
We will show that the SA-hypergraph automaton $A$ accepts a picture graph $M$ such that $p(M) = p$.
We assume $M$ to be embedded in $\Z^2$ such that the nodes cover the axis-parallel rectangle spanned by the points $(1,1)$ and $(h(p),w(p))$, every $v$-edge points downwards, and every $h$-edge points rightwards; recall that our coordinates represent the rows and columns of a matrix.
The derivation leading to the final configuration $(M,\es,g)$ simulates the assembly of tiles which form $\bar p$ as shown in Figure~\ref{fig:simulationmethod}.
The north and west edges of the tile $t_{TL} = \bar p_{(1,1)}$ in the top left corner of $\bar p$ are labelled by $\#$, and therefore, the node $x_{TL} = \theta^{-1}(t_{TL})$ corresponding to $t_{TL}$ forms an initial graph $M_0$.
The adjacent edges of two neighbouring tiles $s,t$ in $\bar p$ are labelled by the same colour.
Suppose $s$ is the west neighbour of $t$, then $\sigma_E(s) = \sigma_{W}(t) \neq \#$ and both tiles belong to the same row, implying that $\sigma_N(s)=\# \iff \sigma_N(t) = \#$ and $\sigma_S(s) = \# \iff  \sigma_S(t) = \#$.
Therefore, their corresponding nodes in $G$ are connected by an $h$-edge $(\theta^{-1}(s),\theta^{-1}(t))\in E_h$.
Analogously, if $s$ is the north neighbour of $t$, then $(\theta^{-1}(s),\theta^{-1}(t))\in E_v$.
Next, we see that the hyperedges of type~I and type~II can be used in order to create the top row and left column of the graph $M$, respectively.
Furthermore, the hyperedges of type~III can be used in order to create all the remaining nodes of $M$.
We conclude that $(M_0,\set{x_{TL}},id)\overset{*}{\underset{A}\rightarrow} (M,O,g)$ is a derivation in $A$ and we will prove that $(M,O,g)$ has to be a final configuration with $O=\es$.
Observe, that hyperedges of types~I and~II leave all the nodes active while hyperedges of type~III deactivate at least the top left node in the hyperedge.
Thus, all nodes except for those in the bottom row and in the right column will be deactivated in the configuration $(M,O,g)$.
Furthermore, in order to create the bottom row and right column hyperedges of type~III.2 and~III.3 are used, respectively, and one rule of type~III.1 is used in order to create the bottom-right node of $M$.
It is easy to see that the derivation function is designed such that all nodes will be deactivated in the configuration $(M,O,g)$ and, therefore, $A$ accepts $M$.

Now, let $M = (N_M,E_{v,M}\cup E_{h,M},\pi_M) \in L(A)$ be a graph which is generated by $A$.
Let $G$ be accepted by the derivation
\begin{equation*}
	(M_0,O_0,g_0) \underset A\to (M_1,O_1,g_1) \underset A\to \cdots \underset A\to (M_k,O_k,g_k)
\end{equation*}
where $(M_0,O_0,g_0) = (G_{e_0},O_{e_0},id)$ is an initial configuration with $e_0\in E_0$ and $(M_k,O_k,g_k) = (M,\es,g)$ is a final configuration.
Let $N_i$ be the node set of the graph $M_i$.
Note that for any $0\le i\le k$ the function $g_i$ is the restriction of $g$ by $N_i$, that is $g_i = g|_{N_i}$.
In order to avoid confusion, nodes in the graph $M$ are consistently denoted by $x,y$ and nodes in the graph $G$ are consistently denoted by $x',y'$; the nodes may have subscripts.

Let the nodes in the graphs $M_0,\ldots,M_k$ be embedded in $\Z^2$ such that all $h$-edges point rightwards and all $v$-edges point downwards; just like we did above.
The creation of graph $M = M_k$ starts with the initial graph $M_0$ which contains only one node $x_{TL}\in N_T\cap N_L$.
Let $x_{TL}$ lie on position $(1,1)$ in all of the graphs $M_0,\ldots,M_k$.
The graph $M_0$ can be extended rightwards by using hyperedges of type I and downwards by hyperedges of type II.
Since none of the hyperedges attach a new node upwards or leftwards of an existing node in $M_{i-1}$ in order to obtain $M_{i}$, the node $x_{TL}$ lies in the top row  and in the left column of $M_i$.
By the definition of type I and II hyperedges, for every node $y$ in the top row (resp., left column) of $M$ we have $g(y)\in N_T$ (resp., $g(y)\in N_L$).
By using hyperedges of type III the area spanned by the top row and left column can be filled with nodes.
It is easy to see that for all graphs $M_0,\ldots, M_k$ we have that if a node lies on position $(i,j)$, then for all $(i',j') \in [i]\times[j]$ a node lies on position $(i',j')$.
Furthermore, if $i'< i$, then the node on position $(i',j')$ has an outgoing $v$-edge, and if $j' <j$, then the node on position $(i',j')$ has an outgoing $h$-edge.
In other words, in the axis-parallel rectangle spanned by the points $(1,1)$ and $(i,j)$ all nodes are connected by edges with all direct neighbours (nodes which have an Euclidean distance of $1$).

In the final configuration $(M_k,\es,g)$ there is no active node.
Thus, the last node which is added to the graph $M_{k-1}$ in order to obtain $M_k$ is a node $x_{BR}$ such that $g(x_{BR})\in N_B\cap N_R$, as all other derivation rules will leave some nodes active.
Next, let us consider the nodes which belong to the same row and column as $x_{BR}$ does.
Note that if two nodes $x$ and $y$ in $M$ are connected by an edge, then the corresponding nodes $g(x)$ and $g(y)$ in $G$ are connected by an edge, too; more precisely, if $(x,y)\in E_{v,M}$, then $(g(x),g(y))\in E_v$, and if $(x,y)\in E_{h,M}$, then $(g(x),g(y))\in E_h$. 
Since a node in $N_B$ (resp., $N_R$) only is connected by $h$-edges (resp., $v$-edges) in $G$ to other nodes from $N_B$ (resp., $N_R$), we see that for every node $y$ in the row of $x_{BR}$ (resp., column of $x_{BR}$) we have $g(y)\in N_B$ (resp., $g(y)\in N_R$).
A node $y'\in N_B$ (resp., $y'\in N_R$) does not have any outgoing $v$-edges (resp., $h$-edges) as the south edge (resp., east edge) of $\theta(y')$ is labelled by $\#$.
We conclude that $x_{BR}$ sits in the bottom row and right column of the graph $M$ and, by the observations made above, this implies that $M$ is a picture graph.

We claim that the picture $p(M)$ which corresponds to the graph $M$ can be generated by the assembly $\bar p$ given by the embedding of nodes in $M$ and the function $\theta\circ g$.
Clearly, for every node $y$ on position $(i,j)$ in $M$ we have that $p(M)_{i,j} = \pi_M(y) = \lambda(\theta(g(y)))$, therefore, the pictures $p$ and $\lambda(\bar p)$ coincide.
Next, we prove that $\bar p$ is a tiled picture in the Wang tile system $W$.
Recall, that all nodes on the top, right, bottom, and left border of $M$ correspond to tiles in $M_T$, $M_R$, $M_B$, and $M_L$, respectively, and therefore, $\bar p$ is well-bordered.
Let $t_x$ and $t_y$ be two neighbouring tiles in $\bar p$ which lie on positions $(i,j)$ and $(i,j+1)$, respectively.
Let $x$ and $y$ be the nodes in $M$ which lie on the positions $(i,j)$ and $(i,j+1)$, respectively. 
Note that $t_x = \theta(g(x))$ and $t_y = \theta(g(y))$.
Since $M$ is a picture graph, $(x,y) \in E_{h,M}$ and $(g(x),g(y))\in E_h$.
The edge set $E_h$ was build to ensure that $\sigma_E(t_x) = \sigma_W(t_y)$.
We conclude that all adjacent east-west edges in $\bar p$ have matching colours. By symmetric arguments, we also conclude that all adjacent north-south edges in $\bar p$ have matching colours.
Therefore, $\bar p$ is a tiled picture in $W$ and $p \in L$.

Finally, let us consider the case when $N_T\cap N_B \neq \es$.
We can add a component to the SA-hypergraph automaton which works similar to a non-deterministic finite automaton and where every hyperedge induces an graph of type~I in Figure~\ref{fig:hyperedges}.
The initial graphs are given by all nodes from $N_T\cap N_B \cap N_L$.
For all nodes $x,y\in N_T\cap N_B$ with $(y,x)\in E_h$ we define a hyperedge $e_{x,y}$ such that $f(e_{x,y}) = \set{x,y}$.
The derivation function is given as $d(e_{x,y}) = \set y \to \set x$ if $x\notin N_R$, and $d(e_{x,y}) = \set y \to \es$ otherwise.
Obviously, this attachment to the hypergraph $A$ accepts all graphs which correspond to pictures $p\in L$ with $h(p) =1$.
The case when $N_L\cap N_R\neq \es$ can be covered analogously.
\qed
\end{proof}

\begin{figure}[ht]
\begin{center}
\includegraphics[width=0.60\textwidth]{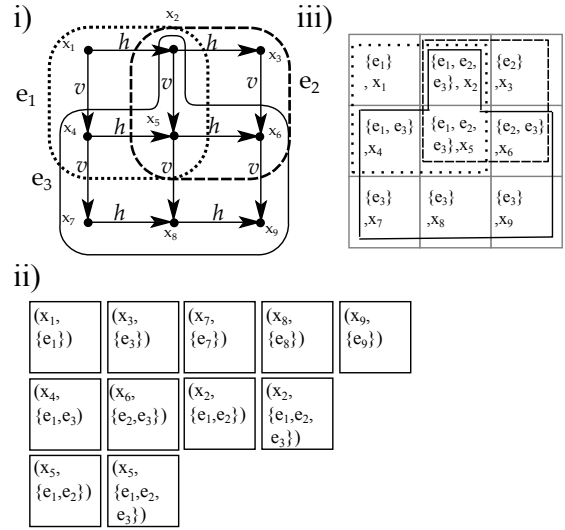}
\caption{
Let $A = (N,E,f,d,G,E_0)$ be a SA-hypergraph automaton where $N$, $E$, $f$, and $G$ are defined in part i). funtion $d$ is defined such that $d(e_1) = \{x_1\} \rightarrow \{x_2,x_4,x_5\}$, $d(e_2) = \{x_2,x_5\} \rightarrow \{x_2,x_5,x_6\}$ and $d(e_3) = \{x_2,x_4,x_5,x_6\} \rightarrow \{ \}$. SA-hypergraph automaton starts from $e_1$. Part ii) shows the set of all the possible tile candidates. On each tile related node and the set of $\psi$ are written. The tiling on part iii) is the result of overlapping of three hyperedges $e_1, e_2$ and $e_3$.}
\label{fig:HypergraphToWangTile}
\end{center}
\end{figure}

Next, we prove that a picture language $L = p(\cL(A))$, associated to the graph language $\cL(A)$, is WTS-recognizable if $A$ does not contain a strong loop.

Let $A$ be an SA-hypergrph automaton.
A series of hyperedges $s = \langle e_0, e_1, \ldots , e_n \rangle$ from $A$ is a {\em (derivation) loop} if $e_0 = e_n$ and $Q_{2,i} \cap Q_{1,i+1} \neq \emptyset$ where $d(e_i)=Q_{1,i} \rightarrow Q_{2,i}$ for $0 \le i < n$.
Loops in an SA-hypergraph automaton are a prerequisite for using a hyperedge several times in one derivation.
Therefore, an SA-hypergraph automaton without any loops can only accept a finite graph language.
Let $G_i = G_{e_i}$ be the graph induced by $e_i$, let $x$ be a node in $G_0=G_n$,
and let $O_i = Q_{2,i} \cap Q_{1,i+1}$ be set overlapping incoming/outgoing active nodes of $G_i$ and $G_{i+1}$.
There is a path in the underlying graph of $A$ from $x$ to $x$ which only visits the subgraphs $G_0,\ldots,G_n$, in the given order, and passes through at least one node of each $O_i$ (the path may use incoming and outgoing edges).
The loop $s$ is a {\it strong loop} if, on this path, the number of incoming horizontal edges equals the number of outgoing horizontal edges and the number of incoming vertical edges equals the number of outgoing vertical edges.
In other words, when starting from a configuration $M$ and successively gluing the hyperedges from $s$ to $M$, then the subgraph added by the hyperedge $e_0$ and the subgraph added by the hyperedge $e_n$ fully overlap when naturally embedded in $\Z^2$.
Note that, by Remark~\ref{rem:subgrid}, all graphs $G_i$ are subgrids which implies that the choice of the path from $x$ to $x$ does not matter in this definition.

\begin{theorem}\label{thm:two}
Let $A$ be a SA-hypergraph automaton without any strong loops.
The picture language $L = p(\cL(A))$, associated to the graph language $\cL(A)$, is WTS-recognizable (Wang tile system recognizable).
\end{theorem}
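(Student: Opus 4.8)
\medskip\noindent\emph{Proof plan.}\quad The plan is to build a Wang tile system $W=(\Sigma,C,\Theta)$ whose mismatch-free tilings encode, pixel by pixel, the accepting derivations of $A$. First I would normalise $A$: by Remark~\ref{rem:subgrid} each $G_e$ may be taken to be a subgrid, and I would discard every non-initial hyperedge $e$ with $Q_1(e)=\es$, since gluing it would disconnect a configuration, so it is never usable. A tile on pixel $(i,j)$ then carries: (a) a node $v\in N$ (tile label $\pi(v)$, with $\#$-edges matching $v\in N_T,N_R,N_B,N_L$); (b) the hyperedge $e^{\mathrm c}$ whose gluing \emph{creates} this occurrence of $v$ --- since $f(e^{\mathrm c})$ is a subgrid this also fixes the pixel's position inside $G_{e^{\mathrm c}}$ --- where $v\in f(e^{\mathrm c})$, and $v\notin Q_1(e^{\mathrm c})$ unless $e^{\mathrm c}$ is initial; (c) if this occurrence is \emph{born active} (i.e.\ $v\in Q_2(e^{\mathrm c})$, or $v\in O_{e^{\mathrm c}}$ for an initial hyperedge), the hyperedge $e^{\mathrm d}$ whose later gluing \emph{deactivates} it, with $v\in Q_1(e^{\mathrm d})$, and the pixel's position in $G_{e^{\mathrm d}}$. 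The colour set $C$ and the compatible adjacencies would be chosen to enforce, purely locally on pairs of neighbouring tiles, the invariants a real derivation satisfies: (I) $h$- and $v$-neighbours carry nodes connected by an $E_h$- resp.\ $E_v$-edge; (II) the creator data of neighbours fit into one translated copy of some $G_e$ (a single copy, because $G_e$ is connected), and likewise for the deactivator data; (III) for every grid edge, at least one endpoint lies inside the $G_e$-copy recorded as the creator of the other endpoint --- exactly the condition forcing that edge to be produced by some gluing step; (IV) a tile records a deactivator iff its node is born active, and every tile named as a contact pixel by another tile's deactivator copy is born active. Finally I would add the usual "single distinguished anchor" gadget (broadcasting horizontally and vertically the coordinates of one pixel and checking uniqueness) to certify that exactly one region of the tiling is a copy of an initial graph $G_{e_0}$, $e_0\in E_0$.

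For the easy inclusion $p(\cL(A))\subseteq\cL(W)$, I would take $M\in\cL(A)$ with an accepting derivation, note that each pixel of $M$ has a unique creating step and, if ever active, a unique deactivating step (the unique later step whose contact set $P$ contains it), and read off the decoration above. Legality of the derivation makes colours match and $\#$-borders come from $N_T,N_R,N_B,N_L$, and the derivation starts from a single initial graph, so the anchor gadget is satisfied; hence $p(M)\in\cL(W)$.

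The interesting direction is $\cL(W)\subseteq p(\cL(A))$. From a mismatch-free tiling of a picture $p$, invariants (I)--(IV) recover a picture graph $M$ with $p(M)=p$ (nodes $=$ pixels, $g=$ recorded nodes, all grid edges present and legal by (I)) together with a finite set $\mathcal R$ of \emph{placements} of the hyperedges (translated copies of the $G_e$'s): one creator placement through each pixel and one deactivator placement through each born-active pixel. A fixed hyperedge $e$ cannot use one pixel twice, since $f(e)$ splits into contact nodes ($Q_1(e)$) and non-contact nodes, so $\mathcal R$ is read off unambiguously. Form the \emph{dependency digraph} on $\mathcal R$ with an arc $\rho'\to\rho$ whenever the contact region of $\rho$ contains a pixel created by $\rho'$. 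The crucial step --- and the only place the hypothesis is used --- is that this digraph is acyclic. A directed cycle $\rho_1\to\cdots\to\rho_m\to\rho_1$ with underlying hyperedges $e_1,\dots,e_m$ gives, for each $i$ (indices mod $m$), a pixel $q_i$ lying in the new part of $\rho_i$ and in the contact region of $\rho_{i+1}$; being a contact pixel, $q_i$ is born active, so $g(q_i)\in Q_2(e_i)\cap Q_1(e_{i+1})$. Concatenating, inside each $G_{e_i}$, a path from $g(q_{i-1})$ to $g(q_i)$ and carrying it into the plane by the translation $\rho_i$ yields a closed walk in $\Z^2$ (it returns to its start by construction); hence along the loop $s=\langle e_1,\dots,e_m,e_1\rangle$ --- a genuine loop precisely because each $g(q_i)\in Q_2(e_i)\cap Q_1(e_{i+1})$ --- the number of incoming horizontal edges equals the number of outgoing ones, and likewise for vertical edges. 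So $s$ is a \emph{strong loop}, contradicting the hypothesis. With acyclicity established, topologically sort $\mathcal R$: every pixel has a unique creator, so every placement with nonempty contact has an incoming arc, whence the sources are exactly the placements of initial hyperedges --- and by the anchor gadget there is exactly one, which becomes the initial configuration. Gluing the rest in topological order gives a legal accepting derivation of $M$: by (III) the edges needed at each gluing are already present, by (IV) the contact pixels are born active and still active (their unique deactivator being the current step), and after the last gluing no pixel is active; so $M\in\cL(A)$ and $p\in p(\cL(A))$.

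I expect the main obstacle to be the detailed engineering of $C$ for invariants (II) and (III): making local edge-colour compatibilities force the recorded creator/deactivator data to assemble into genuinely coherent $G_e$-copies, preventing distinct placements of the same hyperedge from being fused, and guaranteeing that every edge of the reconstructed picture graph is actually generated by some gluing. That this bookkeeping stays finite and local rests on the splitting of $f(e)$ into contact and non-contact nodes. A minor separate point, handled exactly as in the proof of Theorem~\ref{thm:one}, is the case $N_T\cap N_B\neq\es$ or $N_R\cap N_L\neq\es$: it produces pictures of height or width $1$, dealt with by a small extra automaton-like component of $W$.
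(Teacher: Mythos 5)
Your overall strategy coincides with the paper's: encode in each Wang tile the node together with bookkeeping about which hyperedge occurrences create and deactivate it, enforce coherence by local colour matching, and, for the converse inclusion, recover the hyperedge placements, order them by a dependency digraph whose acyclicity follows from the absence of strong loops (your cycle-to-strong-loop argument is exactly the paper's), and rebuild an accepting derivation by topological sorting. The anchor gadget for certifying a unique initial placement is a cosmetic variation; the paper instead argues directly that the digraph of masks is connected and has a unique source.

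The genuine gap lies in what your tiles record and, consequently, in your dependency digraph. In this model a contact node $x\in Q_1$ with $d(e)=Q_1\to Q_2$ \emph{stays active} after the gluing whenever $x\in Q_2$ (see, e.g., $d(e_1)=\{x_1,x_2,x_4\}\to\{x_2,x_4,x_5\}$ in the paper's running example), so a single pixel can serve as a contact pixel of several later placements before one of them finally deactivates it. Your tiles record only the creator $e^{\mathrm c}$ and the deactivator $e^{\mathrm d}$, so (i) the coherence of the \emph{intermediate} user placements through a pixel cannot be enforced locally --- the contact pixels of such a placement do not ``know'' they belong to it, so invariant (II) cannot propagate across them --- and (ii) your digraph has only arcs ``creator of $q$ $\to$ placement using $q$'' but no arcs ``placement using $q$ $\to$ deactivator of $q$''. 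Without the latter, the topological sort may schedule the deactivator of $q$ before another placement that still needs $q$ active, and the reconstructed derivation is illegal; your parenthetical ``their unique deactivator being the current step'' tacitly assumes each pixel is a contact pixel of at most one step, which is false. The paper avoids both problems by storing in each tile the \emph{entire} set $\psi\sse H_x$ of hyperedges incident to that occurrence of $x$ (a ``tile candidate'' with exactly one generator and one consumer), propagating $\psi$ through the local conditions $H_x\cap\phi\sse\psi$ and $\psi\cap H_y\sse\phi$, and defining the order so that the generator mask precedes, and the consumer mask succeeds, every mask through a given pixel. The repair is compatible with your acyclicity argument: an arc from a user to the consumer at pixel $q$ still gives $g(q)\in Q_2$ of the user and $g(q)\in Q_1$ of the consumer, so cycles still produce strong loops.
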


\begin{proof}
Let $A=(N,E,f,d,G, E_0)$ and let $G=(N,E_v\cup E_h,\pi)$.
We may assume that $e\in E_0$ if and only if $d(e) = \es \to O_e$. Therefore, none of the initial hyperedges can be used in a transition. This assumption is justified by the fact that we can duplicate all hyperedges in $E_0$ such that one copy can be used in a transition  but does not belong to $E_0$ and the other copy which belongs to $E_0$ cannot be used in a transition.
Furthermore, any hyperedge without incoming active nodes which does not belong to $E_0$ is useless and can be removed from $E$.

For a node $x \in N$ we define the list of {\em related hyperedges} to $x$, $H_x= \sett{ e \in E}{ x \in f(e) }$.
Let $x$ be a node and $\psi\sse H_x$.
We call a hyperedge $g\in \psi$ a {\em generator} of $(x,\psi)$ if $x\notin Q_1$ with $d(g) = Q_1\to Q_2$.
Note that if $g\in E_0$, then $g$ must be a generator.
We call a hyperedge $c\in \psi$ a {\em consumer} of $(x,\psi)$ if $x\notin Q_2$ with $d(c) = Q_1\to Q_2$.
The pair $(x,\psi)$ is a {\em tile candidate} if $\psi$ contains exactly one generator $g_{(x,\psi)}$ and exactly one consumer $c_{(x,\psi)}$; furthermore, if $g_{(x,\psi)} = c_{(x,\psi)}$, we require that $\psi = \set{g_{(x,\psi)}}$.
Note that if  $g_{(x,\psi)} \neq c_{(x,\psi)}$, then for all $e\in \psi$ with $d(e) = Q_1\to Q_2$, we have that $x\in Q_1$ unless $e$ is the generator and $x\in Q_2$ unless $e$ is the consumer.
The tile candidate $(x,\psi)$ describes the attachment of a copy of the node $x$ to the output graph by the generator; afterwards, $x$ is used as active node by all hyperedges in $\psi\sm\set{g_{(x,\psi)},c_{(x,\psi)}}$; finally, $x$ is deactivated by the consumer.
Let $G_\psi$ be the node-induced subgraph of $G$ by $\bigcup_{e\in\psi}f(e)$.
If $G_\psi$ is not a subgrid (a subgraph of some picture graph), we remove $(x,\psi)$ from the set of tile candidates.
Let $\Psi$ denote the set of all remaining tile candidates.

The Wang tile system $W = (\Sigma, C, \Theta)$ which recognizes $L$ is constructed based on the list $\Psi$.
In order to recognize the picture language associated to $\cL(A)$, we have to define the attachments of tile candidates. We use unordered pairs $\set{(x,\psi),(y,\phi)}\in\Psi^2$ of tile candidates for the colours on the edges.
For a tile candidate $(x,\psi)\in \Psi$ we define the set of labelled Wang tiles
$$\Theta_{(x,\psi)} = \mathcal{S}_{N,{(x,\psi)}} \times \mathcal{S}_{E,{(x,\psi)}}\times \mathcal{S}_{S,{(x,\psi)}}\times\mathcal{S}_{W,{(x,\psi)}}\times \set{l_x}$$
where $l_x$ is the label $\pi(x)$ and $\mathcal{S}_{N,{(x,\psi)}}$, $\mathcal{S}_{E,{(x,\psi)}}$, $\mathcal{S}_{S,{(x,\psi)}}$, $\mathcal{S}_{W,{(x,\psi)}}$ are sets of colours which are defined below.
The set of all tiles is the union $\Theta = \bigcup_{(x,\psi)\in\Psi} \Theta_{(x,\psi)}$.

For $(x,\psi),(y,\phi)\in \Psi$, we let $\set{(x,\psi), (y,\phi)} \in \mathcal{S}_{E,(x,\psi)}$ and $\set{(x,\psi), (y,\phi)} \in \mathcal{S}_{W,(y,\phi)}$ if and only if
\begin{enumerate}%
\item $( x, y ) \in E_h$,
\item $H_x \cap \phi \subseteq \psi$,
\item $\psi \cap H_{y} \subseteq \phi$, and
\item $g_{(x,\psi)} = g_{(y,\phi)}$ or $y\in Q_1$ for $d(g_{(x,\psi)}) = Q_1 \to Q_2$ or
	$x\in Q_1'$ for $d(g_{(y,\phi)}) = Q_1' \to Q_2'$.
\end{enumerate}
For $(x,\psi),(y,\phi)\in \Psi$, we let $\set{(x,\psi), (y,\phi)} \in \mathcal{S}_{S,(x,\psi)}$ and $\set{(x,\psi), (y,\phi)} \in \mathcal{S}_{N,(y,\phi)}$ if and only if $( x, y) \in E_v$ and conditions~2 to~4 are satisfied.
For $(x,\psi)\in \Psi$, we let $\mathcal{S}_{E,{(x,\psi)}}=\set\#$ if $x$ does not have an incoming vertical edges in the graph $G_\psi$. By symmetric condition we let  $\mathcal{S}_{N,{(x,\psi)}}=\set\#$, $\mathcal{S}_{S,{(x,\psi)}}=\set\#$, or $\mathcal{S}_{W,{(x,\psi)}}=\set\#$.

Now, consider an $m\times n$-picture graph $M = (N_M,E_{v,M}\cup E_{h,M},\pi_M)\in \cL(A)$.
We will show that there is a tiled version $\bar p$ of picture $p = p(M)$ which uses tiles from $\Theta$ and, therefore, $p$ is recognized by $W$.
Let $G$ be accepted by the derivation
\[
        (M_0,O_0,g_0) \underset A\to (M_1,O_1,g_1) \underset A\to \cdots \underset A\to (M_k,O_k,g_k)
\]
where $(M_0,O_0,g_0) = (G_{e_0},O_{e_0},id)$  is an initial configuration (that is $e_0\in E_0$) and $(M_k,O_k,g_k) = (M,\es,g)$ is a final configuration.
Let $e_i$ be the hyperedge and $P_i\sse O_{i-1}$ be the active nodes which are used in the transition $(M_{i-1},O_{i-1},g_{i-1}) \underset A\to (M_{i},O_{i},g_{i})$. Let $d(e_i) = Q_{1,i}\to Q_{2,i}$ for $1\le i\le k$.
Recall that, by definition, $M_{i-1}$ is a full subgraph of $M_{i}$ and, by induction, every graph $M_i$ is a full subgraph of $M$.
Being an $m\times n$-picture graph, the nodes in $M$ can be naturally embedded in $[m]\times [n]$ by the function $f_M$.

Consider one node $x'\in N_M$ and its original $x = g(x')$ in $G$.
We assign to $x'$ a list of hyperedges $\psi\sse E$ such that $e_i\in\psi$ if $x'\in P_i$ or $x'$ belongs to $M_{i}$ but not $M_{i-1}$.
We intend to use a tile from $\Theta_{(x,\psi)}$ for the pixel ${\bar p}_{f_M(x')}$ representing $x'$ in the tiled picture $\bar p$.
Observe that $\psi$ contains a consumer as $x'$ is not active in the final configuration and $\psi$ cannot contain two consumers because a node can only be deactivated once.
In addition, the hyperedge $e_i$ such that $x'$ belongs to $M_{i}$ but not $M_{i-1}$ is the single generator in $\psi$.
Since $G_\psi$ is isomorphic to a subgraph of $M$, we conclude that $(x,\psi)$ is indeed a tile candidate.
If $x'$ does not have an outgoing horizontal edge, then the node $x$ in the graph $G_\psi$ cannot have an outgoing horizontal edge either and, therefore, $\mathcal{S}_{E,(x,\psi)} = \set\#$. Symmetric arguments apply if $x$ does not have an incoming horizontal, outgoing vertical, or incoming vertical edge.

Next, consider two nodes $x',y'\in N_M$ which are connected by an edge and, by symmetry, assume $(x',y')$ is a horizontal edge.
Let $x = g(x')$, $y = g(y')$ be their originals and let $\psi,\phi$ be the set of hyperedges associated to $x',y'$, respectively.
We will show that $\{(x,\psi),(y,\phi)\}$ is a colour in $\mathcal{S}_{E,(x\psi)}$ as well as in $\mathcal{S}_{W,(y,\phi)}$.
Thus, we can choose tiles from $\Theta_{(x,\psi)}$ and $\Theta_{(y,\phi)}$ for the positions $f_M(x')$ and $f_M(y')$ in $\bar p$, respectively.
Clearly, the choice of the tiles also depends on the other neighbours of $x'$ and $y'$.
We have to show that conditions~1 to~5, above, are satisfied.
The first condition is satisfied by assumption.
By contradiction, suppose the second condition is not satisfied.
There is $e_i\in H_x \cap \phi \setminus \psi$; thus, in the $i$-th step of the derivation we use the hyperedge $e_i$ that presupposes or generates an edge $(x'',y')$ in $M$ where $g(x'') = x$ but $x'' \neq x'$.
This would imply that $y$ has two incoming horizontal edges whence $M$ is not a picture graph.
The third condition is satisfied by symmetric arguments.
The edge $(x',y')$ in $M$ can only be created in a step $i$ where $x'$ or $y'$ is added to the graph $M_{i-1}$.
Thus, $x'$ and $y'$ either have the same generator in $(x,\psi)$ and $(y,\phi)$, or $x'$ is in the active nodes when $y'$ is generated, or $y'$ is in the active nodes when $x'$ is generated.
In all cases condition~4 is satisfied.

We conclude that a tiled picture $\bar p$ such that $p = p(M)$ and $M\in \mathcal{L}(A)$ can be generated by using tiles from $\Theta$ and, therefore, $p(M)\in\cL(W)$.

\bigskip

Consider a picture $p\in \cL(W)$ and let $\bar{p}$ be the tiled version of $p$, using tiles from $\Theta = \bigcup_{(x,\psi)\in\Psi} \Theta_{(x,\psi)}$.

We start by introducing the concept of masks which can be seen as connected subpictures of the tiled picture $\bar p$ that represent the nodes in one hyperedge.
A \textit{mask} $\mathfrak{m}$ is a $h(\bar{p}) \times w(\bar{p})$ matrix of tiles from $\Theta \cup \{ \text{\it{empty}}\}$, such that either $\mathfrak{m}_{(i,j)} = \text{\it{empty}}$ or $\mathfrak{m}_{(i,j)} = \bar{p}_{(i,j)}$ for all $(i,j)\in [h(\bar{p})] \times [w(\bar{p})]$.
In addition, we require that the non-empty entries in $\mathfrak m$ are connected; that is, for  every pair of tiles $\mathfrak{m}_{(i',j')},\mathfrak{m}_{(i,j)}\in \Theta$ there exists a sequence $r=\langle r_0, r_1, \cdots , r_n\rangle$ of tiles in $\mathfrak{m}$ such that $r_0 = \mathfrak{m}_{(i,j)}$, $r_n = \mathfrak{m}_{(i',j')}$, $r_k \in \Theta$, and $r_k$, $r_{k+1}$ must be adjacent for all $0 \leq k < n$.

Let $e\in E$ be an hyperedge and let $G_e = (N_e,E_{e,v}\cup E_{e,h}, \pi_e)$ be the graph induced by this hyperedge.
By Remark~\ref{rem:subgrid}, we assume that $G_e$ is a subgrid.
We say $G_e$ is mapped to a mask $\mathfrak m$ if there is a injective function $h\colon N_e \to [h(\bar p)]\times [w(\bar p)]$ which satisfies:
$\mathfrak m_{(i,j)}$ belongs to $\Theta$ if and only if $(i,j)$ is in the domain of $h$;
for all nodes $x,y\in N_e$ there is an edge $(x,y)\in E_{e,h}$ (resp., $(x,y)\in E_{e,v}$) if and only if $h(x)$ is in north (resp., west) neighbour of $h(y)$.
Whenever we use this mapping, we will ensure that for all $x\in G_e$ the tile $\bar p_{h(x)}$ belongs to $\Theta_{(x,\psi)}$ for some $\psi\sse E$.
  
Consider a tile $t\in\bar p_{(i,j)} \in \Theta_{(x,\psi)}$ and a hyperedge $e\in \psi$.
We define the mask $\mathfrak m^{[(i,j),x,e]}$ such that the graph $G_e$ can be mapped by function $h$ to $\mathfrak m^{[(i,j),x,e]}$ and $h(x) = (i,j)$.
We say that $e$ is the hyperedge related to the mask $\mathfrak m^{[(i,j),x,e]}$.
Let $t' =\bar p_{(i',j')} \in \Theta(y,\phi)$ be a tile that is adjacent to $t$ and let $e \in \psi$.
For simplicity we only consider the case when $t'$ is the east neighbour of~$t$; i.e., $(i',j') = (i+1,j)$.
We will show that if $(i',j')$ is non-empty in $\mathfrak m^{[(i,j),x,e]}$, then $e \in \phi$.
Since $t'$ is the east neighbour of $t$ conditions~1 to~4, above, apply.
As $(i',j')$ is non-empty in $\mathfrak m^{[(i,j),x,e]}$, there exists a horizontal edge $(x,z)$ in $G_e$.
Furthermore, from conditions~1 and~4 it follows that $(x,y)$ is a horizontal edge in the graph $G_{g}$ induced by the generator $g = g_{(x,\phi)}$.
As both graphs $G_e$ and $G_g$ are subgraphs of the subgrid $G_\psi$, we see that the edges $(x,y)$ and $(x,z)$ coincide, thus, $y = z$.
We conclude $y\in G_e$ and $e\in H_y$.
By condition~3, $e\in \phi$.
Because the hyperedge $e$ induces a connected graph, we can infer that for all non-empty $\mathfrak m^{[(i,j),x,e]}_{(i'',j'')} \in \Theta_{(z,\chi)}$, we find $e\in \chi$.
Note that this also implies that $\mathfrak m^{[(i,j),x,e]} = \mathfrak m^{[(i',j'),y,e]}= \mathfrak m^{[(i'',j''),z,e]}$.

We define the set of masks $\mu = \{ \mathfrak m^{[(i,j),x,e]} | \bar p_{(i,j)}\in\Theta_{(x,\psi)}, e\in\psi \}$ which are induced by hyperedges in the above manner.
Intuitively, every mask in $\mu$ represents one transition in the derivation of a picture graph $M$ which represents the picture $p = p(M)$.
In order to use a transition defined by a mask, we need to guarantee that all of its input areas exist and are active. 
We will order the set $\mu$ accordingly.
Let us define the relation ${R} \subseteq \mu \times \mu$ such that $(\mathfrak m,\mathfrak n)\in R$ if the transition represented by $\mathfrak m$ has to be used before the transition represented by $\mathfrak n$. 
Let $\mathfrak m$ and $\mathfrak n$ be two distinct masks in $\mu$.
The pair $(\mathfrak m,\mathfrak n)$ is in ${R}$ if there exists 
$(i,j)$ such that $\mathfrak m_{(i,j)} = \mathfrak n_{(i,j)} \in \Theta_{(x,\psi)}$, and
$\mathfrak m = \mathfrak m^{[(i,j),x,g]}$ where $g = g_{(x,\psi)}$
or $\mathfrak n = \mathfrak m^{[(i,j),x,c]}$ where $c = c_{(x,\psi)}$.
The pair ${(\mu,{R})}$ can be seen as directed graph $G_{\mu}$.
First, we show that the graph ${G_{\mu}}$ does not contain any loops, afterwards, a topological sort of this graph is used to order the transitions represented by the masks. 


When two or masks overlap on a tile (have a common non-empty entry), regarding the construction of tile candidates, we know that the related hyperedge of exactly one of these masks is the generator of the input area of the other hyperedges. Hence, these masks are connected in the graph  ${G_\mu}$. 
By contradiction, assume that $\langle \mathfrak n_0,\mathfrak n_1,\ldots,\mathfrak n_l \rangle$ is a non-trivial loop in ${G_{\mu}}$ (i.e., $(\mathfrak n_i,\mathfrak n_{i+1})\in R$ for every $0\le i<l-1$ and $(\mathfrak n_l,\mathfrak n_{0})\in R$).
However, the sequence of related hyperedges to this sequence of mask is a strong loop in the SA-hypergraph automaton $A$ which was excluded by assumption.
Moreover, since two tiles with different generators cannot connect without satisfying conditions~4, the graph $G_\mu$ must be connected.
Therefore, graph  ${G_{\mu}}$ can be topologically sorted. Sorting of the hyperedges guaranteed that the active input nodes of one hyperedge are generated before the gluing of the hyperedge. 

By contraction, assume that graph ${G_\mu}$ has two distinct nodes $\mathfrak m_1$ and $\mathfrak m_2$ without any input edges. 
Let $\mathfrak m_3$ be the first node in the topological order such that  paths $\mathfrak m_1 \to^* \mathfrak m_3$ and $\mathfrak m_2 \to^* \mathfrak m_3$ exist in $\mathcal G_\mu$.
As $\mathfrak m_3$ is chosen minimal, these paths do not share any node other than $\mathfrak m_3$.
Recall that all incoming active nodes of a hyperedge are connected. Considering that two nodes cannot connect to each other unless they are in the same hyperedge or they have glued to each other, we have a contradiction as $\mathfrak m_3$ cannot be the first common node on both paths. 
We conclude that graph ${G_{\mu}}$ has only one node without input.

Now, let $\mathfrak m_0, \mathfrak m_1, \ldots, \mathfrak m_k$ be the topological sort of $\mu$ by the relation $\mathcal R$.
We define $\mathfrak m + \mathfrak n = \mathfrak o$ such that $\mathfrak o_{(i,j)}=empty$ if $\mathfrak m_{(i,j)} = \mathfrak n_{(i,j)} = empty$; otherwise, $\mathfrak o_{(i,j)} = \bar p_{(i,j)}$.
We will show that a graph $M_k$ can be generated by a derivation
\[
  (M_0,O_0,g_0) \underset A\to (M_1,O_1,g_1) \underset A\to \cdots \underset A\to (M_k,O_k,g_k)
\]
such that the graph $M_i$ can be mapped to the mask $\sum_{j=0}^i \mathfrak m_j$;
this implies that $m_k$ can be mapped to $ \bar p = \sum_{j=0}^k \mathfrak m_j$.
Let $e_i$ be the hyperedge related to the mask $\mathfrak m$.
The graph $M_0 = G_{e_0}$ is an initial graph because $\mathfrak m_0$ has no incoming edges in $\mathcal G_\mu$ and, therefore, the derivation function of $e_0$ is $d(e_0) = \emptyset \to Q_2$;
thus, $(M_0,O_0,g_0)$ where $O_0 = Q_2$ and $g_0 = id$ is an initial configuration.
In derivation step $(M_{i-1},O_{i-1},g_{i-1}) \underset A\to (M_i,O_i,g_i)$ we use the hyperedge $e_i$.
By induction, we can assume that $M_{i-1}$ can be mapped to $\sum_{j=0}^{i-1} \mathfrak m_j$ by a function $h_{i-1}$.
There is only one way to glue the hyperedge $e_i$ to $M_{i-1}$ such that resulting graph $M_i$ can be mapped to $\sum_{j=0}^{i} \mathfrak m_j$.
We have to prove that all incoming active nodes of $G_{e_i}$ exist and are active in $M_i$.
Let $x$ be an incoming active node which is represented by the tile $\bar p_{(a,b)}\in \Theta_{(x,\psi)}$. The definition of $R$ ensures that the mask representing the generator of $(x,\psi)$ in $(a,b)$ has already been used and that the mask representing the consumer of $(x,\psi)$ in $(a,b)$ has not yet been used.
Finally, every tile candidate has a consumer which means that there are no active nodes in the final configuration $(M_k,O_k,g_k)$.
As result, the picture $p$, generated by the suggested tiling system, is in $p(\cL(A))$.  \qed
\end{proof}

\section{Conclusion}
We introduced SA hypergraph automata, a  language/automata theoretic model for patterned self-assembly systems.
SA hypergraph automata accept all recognizable picture languages but, unlike other models,
 (e.g. Wang Tile Automata) SA-hypergraph automata do not  rely on an {\it a priori} given  scanning strategy of a
 picture. This property makes the  SA hypergraph automata  better suited  to model 
 DNA-tile-based self-assembly systems. 

SA-hypergraph automata   provide a  natural automata-theoretic model for patterned
 self-assemblies  that will  enable us to analyse self-assembly  in an automata-theoretic framework. 
This framework  lends itself easily to, e.g.,  descriptional and computational complexity analysis, 
and  such studies  may ultimately lead to  classifications and hierarchies of patterned self-assembly
 systems based on the properties of their corresponding SA-hypergraph automata. An additional feature is
 that each SA-hypergraph automaton  accepts an entire class of ``supertiles'' as opposed to a singleton set,
 which may also be of interest  for some applications or analyses.

\bibliographystyle{abbrv}
\bibliography{reflist}

\end{document}